 \let\mathscr\relax% just so we can load this and rsfs
\newtheorem{theorem}{Theorem}[section]
\newtheorem{corollary}{Corollary}[theorem]
\newtheorem{lemma}[theorem]{Lemma}
\newtheorem{definition}{Definition}[section]
\newcommand{\true}[0]{\mbox{\em true}} 
\newcommand{\false}[0]{\mbox{\em false}}
\newcommand{\go}[1]{\mbox{\sc go}_{#1}} 
\newcommand{\temp}[1]{v_{#1}} 
\newcommand{\pred}[1]{pred_{#1}}
\newcommand{\token}[0]{\texttt{token}}
\newcommand{\node}[1]{\mbox{\sc node}_{#1}}
\newcommand{\SENTINEL}[0]{\mbox{\sc sentinel}}
\newcommand{\pc}[1]{PC_{#1}}
\newcommand{\X}[0]{\mbox{\sc tail}}
\newcommand{\A}[0]{\mathcal{A}}
\newcommand{\mynode}[1]{mynode_{#1}}
\newcommand{\indic}[1]{\mathbbm{1}_{\{#1\}} } 
\newcommand{\notcached}[1]{notcached(#1)}
\newcommand{\nil}[0]{\texttt{nil}}
\newcommand{\Swap}[1]{\Call{FAS}{#1}} 
\newcommand{\CAS}[1]{\Call{CAS}{#1}}
\title{\bf Constant Amortized RMR Complexity \\Deterministic Abortable Mutual Exclusion Algorithm \\for CC and DSM Models}
\author{Prasad Jayanti \\ 
\href{mailto:prasad.jayanti@dartmouth.edu}{\small prasad.jayanti@dartmouth.edu} 
   \and Siddhartha Jayanti \\ 
\href{mailto:jayanti@mit.edu}{\small jayanti@mit.edu} 
}
\date{\today}
\begin{document}

\maketitle

\begin{abstract}
The {\em abortable} mutual exclusion problem was introduced by Scott and Scherer
to meet a need that arises in database and real time systems, where processes sometimes
have to abandon their attempt to acquire a mutual exclusion lock
to initiate recovery from a potential deadlock or to avoid overshooting a deadline.
Algorithms of $O(1)$ RMR complexity have been known for the standard mutual exclusion problem for both the
Cache-Coherent (CC) and Distributed Shared Memory (DSM) models of mutiprocessors,
but whether $O(1)$ RMR complexity is also achievable for abortable mutual exclusion has remained open 
for the 18 years that this problem has been investigated.

Jayanti gives a $\Theta(\log n)$ worst case RMR complexity solution for the CC and DSM models,
where $n$ is the maximum number of processes that execute the algorithm concurrently.
Giakouppis and Woelfel's algorithm, presented at PODC last year, is an $O(1)$ amortized complexity algorithm,
but it works only for the CC model, uses randomization, does not satisfy Starvation Freedom, and the $O(1)$ amortized bound holds only in expectation
and is proven for the a weak (oblivious) adversary model.

We design an algorithm that is free of these limitations:
our algorithm is deterministic, supports fast aborts (a process completes an abort in $O(1)$ steps),
has a small space complexity of $O(n)$,
requires hardware support for only the Fetch\&Store instruction,
satisfies a novely defined First Come First Served for abortable locks,
and most importantly, has $O(1)$ amortized RMR complexity for both the CC and DSM models.
Our algorithm is short and practical with fewer than a dozen lines of code, and is accompanied by
a rigorous proof of mutual exclusion through invariants and
of starvation-freedom and complexity analysis through distance and potential functions.
Thus, modulo amortization, our result answers affirmatively the long standing open question described above.

\end{abstract}

\pagebreak

\section{Introduction}

The {\em mutual exclusion problem}, proposed by Dijkstra, is a fundamental problem in concurrent computing \cite{Dijkstra}.
It calls for the design of an algorithm by which multiple asynchronous processes can compete with each other 
to acquire and release a lock so that at most one process is in possession of the lock at any time.
In over 50 years of research on this problem, the first half, surveyed in the book \cite{Raynal}, was devoted to identifying desirable properties
that an algorithm should have (e.g., lock-freedom, starvation-freedom, first-come-first-served, wait-free release, self-stabilization)
and designing algorithms to realize these properties.

The focus in the last three decades has been on designing {\em scalable} algorithms to mutual exclusion and
related problems that perform well on shared memory multiprocessors, where accessing a remotely located variable
is a few orders of magnitude slower than accessing a locally resident variable 
because of the low speed and bandwidth of the interconnection network between processors and memory modules.
Two common models of multiprocessors are considered in the literature.
In {\em distributed shared memory} (DSM) multiprocessors, the shared memory is partitioned and each processor is assigned one partition.
A shared variable $x$ resides permanently in some process's partition, but
every process $p$ can perform operations on $x$, regardless of whether $x$ resides in $p$'s partition or not.
A (read or a non-read) operation by $p$ on $x$ is counted as a {\em remote memory reference} (RMR) 
if and only if $x$ does not reside in $p$'s partition of shared memory.
In the {\em cache-coherent} (CC) model, all shared variables reside in a memory module that is remote to all processes.
Additionally, each process has a local cache where copies of shared variables can reside.
When a process $p$ reads a shared variable $x$, $x$'s copy is brought into $p$'s cache if it is not already there.
When $p$ performs a non-read operation on $x$, copies of $x$ in all caches are deleted.
Thus, a read operation by $p$ on a shared variable that is in $p$'s cache has no need to access the network,
but every other read operation and every non-read operation accesses the network and is
counted as an RMR.

{\em Remote Memory Reference (RMR) complexity},
which is the number of RMRs that a process performs in order to acquire and release the lock once,
is the common complexity measure for mutual exclusion algorithms.
The RMR complexity of an algorithm, in general, depends on $n$,
the number of processes for which the algorithm is designed,
but the ideal is to design an algorithm whose RMR complexity is $O(1)$---a constant independent of $n$.
A highlight in distributed computing research is that this ideal was achieved by
the algorithms of Graunke and Thakkar \cite{Graunke} and Anderson \cite{Anderson} for the CC model, and by Mellor-Crummey and Scott \cite{MCS}
for both the CC and DSM models.

Scott and Scherer observed that there are many systems whose demands 
are not met by the standard mutual exclusion locks described above.
Specifically, the mutual exclusion locks employed in database systems and in real time systems must support
``time out'' capability, that is, it must be possible for a process 
that waits ``too long'' to abort its attempt to acquire the lock \cite{Scott1}.
In data base systems, including Oracle's Parallel Server and IBM's DB2, the ability of a thread to abort its attempt
to acquire the lock serves the dual purpose of recovering from transaction deadlock and tolerating preemption of 
the thread that holds the lock \cite{Scott1, Scott2, Giakkoupis}.
The abort capability is also useful in real time systems to avoid overshooting a deadline.
Scott and Scherer therefore initiated research on abortable mutual exclusion locks,
which allow waiting processes to abort their attempt to acquire the lock. 
Their original algorithm \cite{Scott1} allows an aborting process to be blocked from completing its abort
by other processes, which is unacceptable.
Scott's subsequent algorithm overcomes this drawback, but has unbounded RMR complexity \cite{Scott2}.
Jayanti gives a rigorous specification of the abortable mutual exclusion problem and
presents the first algorithm of bounded RMR complexity \cite{Jayanti},
but the quintessential question of

\begin{quote}
{\sl
Is $O(1)$ RMR complexity, which is achievable for standard mutual exclusion,
also achievable for abortable mutual exclusion for both the CC and DSM models?
}
\end{quote}

\noindent
has remained open for the last 18 years.
Since Jayanti's $\Theta(\log n)$ solution for CC and DSM models 15 years ago,
there has been some progress on this question for the CC model, but no progress at all for the DSM model.
In this paper we answer this open question affirmatively in the amortized sense
by designing a single algorithm that simultaneously ensures $O(1)$ amortized RMR complexity
for both the CC and DSM models.
In the next section we describe the previous research and state our result elaborately
after providing a clear specification of the problem.
In Section 3 we present our algorithm that works on both the CC and DSM models.
We prove the correctness of our algorithm in Section 4, and state and prove RMR and space complexity bounds 
in Section 5.

\section{Problem Specification, Previous Research, and Main Result}

\subsection{Specification of abortable mutual exclusion}

In the abortable mutual exclusion problem, each process is modeled by
five sections of code---Remainder, Try, Critical, Exit, or Abort sections.
A process stays in the Remainder when it does not need the lock and, once it wants the lock,
it executes the Try section concurrently with others that are also competing for the lock.
Anytime a process is outside the Remainder section,
the environment can send an ``abort'' signal to the process
(how the environment sends this signal to a process does not concern us).
From the Try section, a process jumps either to the Critical Section (CS) or to the Abort section,
with the proviso that a process may jump to the Abort section only if it receives the ``abort'' signal from the environment
while in the Try section.
While in the CS, a process has exclusive ownership of the lock.
When it no longer needs the lock, the process gives it up by executing the Exit section to completion
and then moving back to Remainder.
If a process enters the Abort section from the Try section,
upon completing the Abort section the process moves back to Remainder.

The {\em abortable mutual exclusion problem} consists of designing the code for the Try, Exit, and Abort sections
so that the following properties are satisfied \cite{Jayanti}.

\begin{itemize}
\item 
\underline{Mutual Exclusion}: At most one process is in the CS at any time.
\item
\underline{Wait-Free Exit}: There is a bound $b$ such that each process in the Exit section 
completes that section in at most $b$ of its own steps.
\item
\underline{Wait-Free Abort}: There is a bound $b$ such that, once a process receives the abort signal
from the environment, it will enter the Remainder section in at most $b$ of its own steps.

We call this bound $b$ the {\em abort-time}.

\item
\underline{Starvation-Freedom}:
Under the assumption that no process stays in the CS forever and no process stops taking steps while in the Try, Exit, or Abort sections,
if a process in the Try section does not abort, it eventually enters the CS.
\end{itemize}

We now state another desirable property that has never been proposed or investigated earlier.
In any application, the environment sends the abort signal to a process
only when there is some urgent task that the environment needs the process to attend to.
In such a situation, we would want the process to ``quickly'' abort from its attempt to acquire the lock,
as formalized by the following property:

\begin{itemize}
\item
\underline{Fast Abort}: The abort-time is a constant that is independent of the number of 
processes for which the algorithm is designed.
\end{itemize}

We now define a novel fairness property called {\em Airline First Come First Served} (AFCFS),
which is a natural adaptation of the standard First Come First Served (FCFS) property for the abortable setting.
For intuition, imagine you are waiting to check-in in a long airline queue.
If you leave the queue to go to the restroom and return, there are two possibilities for your re-entry, both of which are reasonable.
In the first possibility, you are allowed back into your original position, as if you had never left;
in the second possibility, people behind you occupy your original position, and thus force you to go back in the queue.
Standard FCFS was not defined with aborting in mind, so it does not allow for the first possibility.
In contrast, we define our AFCFS fairness condition below to admit either of these two natural possibilities.

An {\em attempt} by a process $p$ starts when $p$ begins executing the Try section and completes
at the earliest later time when $p$ completes the Exit section or the Abort section.
Thus, the last attempt by a process maybe {\em incomplete}.
We say an attempt $a$ by a process $p$ is {\em successful} if $a$ ends in $p$ completing the Exit Section. 
%{\em unsuccessful} if $a$ ends in $p$ performing the Abort Section and not the Exit Section, and 
%{\em incomplete} if $a$ has not yet reached the Remainder Section by the end of the rundd.
Let $a_1, a_2\ldots,a_k$ be the entire sequence of attempts made by a process $p$.
A contiguous subsequence $\pi = a_i,a_{i+1},\ldots,a_j$ is called a {\em passage} if:
(1) $i = 1$ or $a_{i-1}$ is a successful attempt, 
(2)  No attempt in $a_i,\ldots,a_{j-1}$ is successful, and 
(3) $a_j$ is a successful attempt or $j = k$.
Thus, passages partition the entire attempt sequence into contiguous subsequences that end in successful attempts (or the last attempt).
As when defining standard FCFS, the Try Section consists of a bounded Doorway followed by a Waiting Room.
A passage $\pi$ by $p$ {\em $\A$-precedes} a passage $\pi'$ by $p'$ if 
$p$ completes the doorway in the last attempt of $\pi$ before $p'$ begins $\pi'$.
%We say process $p$ completes the Doorway of its passage $\pi = a_i,a_{i+1},\ldots,a_j$ when $p$ completes the Doorway in the final attempt $a_j$ of the passage.
%We define a passage $\pi = a_i,\ldots,a_j$ to have completed the doorway when the doorway is completed in the final attempt $a_j$ of the passage.
%We now state the fairness property:

\begin{itemize}
\item
\underline{Airline First Come First Served (AFCFS)}: If passage $\pi$ by process $p$ $\A$-precedes passage $\pi'$ by process $p'$,
then $p'$ does not enter the CS in $\pi'$ before $p$ enters the CS in $\pi$.
\end{itemize}

\subsection{Worst-case and amortized RMR complexity}

The {\em RMR complexity of an attempt} of a process $p$ in a run of the algorithm is the number of RMRs that $p$ performs in that passage. 
The {\em worst case RMR complexity of a run of an algorithm} is the maximum RMR complexity of an attempt in that run.
The {\em worst case RMR complexity of an algorithm} is the maximum, over all runs $R$ of the algorithm, of the RMR complexity of $R$.

The {\em amortized RMR complexity of a finite run of an algorithm} is $x/y$, where $x$ is the total number of RMRs performed in that run
by all of the processes together and $y$ is the total number of attempts initiated in that run by all of the processes together.
The {\em amortized RMR complexity of an infinite run $R$ of an algorithm} is the maximum,
over all finite prefixes $R'$ of $R$, of the amortized RMR complexity of $R'$.
The {\em amortized RMR complexity of an algorithm} is the maximum,
over all runs $R$ of the algorithm, of the amortized RMR complexity of $R$.

\subsection{Previous research}

\begin{table}[]
{\scriptsize
\begin{tabular}{|l|l|l|c|c|l|c|l|c|}
\hline
{\bf Algorithm} & {\bf Primitive} & {\bf RMRs} & {\bf WC / Amrt} & {\bf Det.} & {\bf Space} & {\bf DSM} & {\bf Fairness} & {\bf Fast Abort}\\ \hline
Scott et al. \cite{Scott1}  & $\Swap{} , \CAS{}$   & $\infty$ & WC & \checkmark   & $\infty$ & \checkmark  &  &          \\ \hline
Scott \cite{Scott2} CLH-NB  & $\Swap{} , \CAS{}$   & $\infty$ & WC & \checkmark   & $\infty$ &  \checkmark  &  &         \\ \hline
Scott \cite{Scott2} MCS-NB  & $\Swap{} , \CAS{}$   & $\infty$ & WC & \checkmark   & $\infty$ & \checkmark  &  &         \\ \hline
Jayanti \cite{Jayanti}        & $\CAS{}$   & $\Theta(\log n)$ & WC & \checkmark   & $\Theta(n)$ & \checkmark & FCFS &           \\ \hline
Lee \cite{Lee} Alg 1        & None   & $\Theta(\log n)$ & WC & \checkmark   & $\Theta(n \log n)$ & \checkmark & &          \\ \hline
Lee \cite{Lee} Alg 2        & $\Swap{} , \CAS{}$   & $\Theta(n)$ & WC & \checkmark   & $\Theta(n)$ &  &  &           \\ \hline
Lee \cite{Lee} Alg 3        & $\Swap{}$   & $\Theta(n^2)$ & WC & \checkmark   & $\infty$ &  & FCFS &         \\ \hline
Lee \cite{Lee} Alg 4        & $\Swap{}$   & $\Theta(n^2)$ & WC & \checkmark   & $\Theta(n^2)$ &  & FCFS &         \\ \hline
Woelfel et al. \cite{Woelfel} & $\CAS{}$   & $O(\frac{\log n}{\log\log n})$ & WC   &    & $\Theta(n)$ & & &         \\ \hline
Giakkoupis et al. \cite{Giakkoupis} & $\CAS{}$   & $\Theta(1)$ & Amrt   &      & $\infty$ &  & &         \\ \hline
Alon et al. \cite{Alon} & $\text{F\&A},\CAS{}$   & $O(\frac{\log n}{\log\log n})$ & WC & \checkmark   & $O(n^2)$ & & &         \\ \hline
{\em Present Work} & $\Swap{}$   & $\Theta(1)$ & Amrt  &  \checkmark   & $O(n)$ & \checkmark & AFCFS & \checkmark          \\ \hline

\end{tabular}
}
\caption{Summary of abortable locks. The columns describe: RMR complexity; whether the complexity is worst case (WC) or amortized (Amrt); whether the algorithm is Deterministic (Det.) or randomized; space complexity; whether the RMR bound holds for the DSM model; what fairness condition (if any) the algorithm satisfies; and whether the algorithm supports Fast Abort.}
\label{tab:prevwork}

\end{table}

%Giakkoupis: \makecell{Assumes oblivious adversary \\ only Live-Lock Free \\ polynomial space bound} : expected amortized
%Woelfel: Assumes oblivious adversary : Expected worst case

We list previous algorithms and their properties in Table \ref{tab:prevwork}, and now discuss the most relevant of these works.
The algorithms by Scott and Scherer \cite{Scott1}, and Scott \cite{Scott2} have unbounded RMR complexity.
Jayanti presents the first algorithm of bounded RMR complexity,
but his algorithm has a non-constant RMR of $\Theta(\log n)$, where $n$ is the number of processes that may execute the algorithm concurrently.
Attiya, Hendler, and Woefel prove that, in order to achieve sub-logarithmic RMR complexity,
one must employ randomization, amortization, or primitives other than read, write, and CAS \cite{Attiya}.

Among algorithms designed for the the CC model, the four algorithms in Lee's dissertation have non-constant RMRs of $\Theta(\log n)$ $\Theta(n)$, $\Theta(n^2)$, and $\Theta(n^2)$ \cite{Lee}.
Lee claims in passing that his second algorithm has $O(1)$ amortized RMR complexity, but he does not provide any proof to substantiate this claim.
Woelfel and Pareek \cite{Woelfel} and Alon and Morrison \cite{Alon} design randomized and deterministic algorithms, respectively,
that have sub-logarithmic, but not constant RMR complexity.
In last year's PODC, Giakkoupis and Woelfel give the only proven constant (amortized) RMR abortable lock,
but their algorithm is not deterministic and works only against a weak oblivious adversary \cite{Giakkoupis}.
Furthermore, they claim that their algorithm has polynomial space complexity,
but do not provide a bound on the degree of the polynomial.

The picture is starker for the DSM model: only two algorithms---Jayanti's \cite{Jayanti} and Lee's first of four algorithms \cite{Lee}---of bounded RMR complexity are known,
but both have logarithmic RMR complexity.

\subsection{Our result}

We present an amortized algorithm that has all of the desirable properties:
has $O(1)$ amortized RMR complexity on both CC and DSM machines, 
has $O(n)$ space complexity,
is deterministic,
satisfies the AFCFS fairness condition,
and satisfies Fast Abort.
%We present an $O(1)$ RMR complexity algorithm that is free of the limitations
%of Giakkoupis and Woelfel's algorithm and Lee's algorithms:
%our algorithm is deterministic, satisfies Fast Abort (its abort time is 6), satisfies AFCFS fairness,
%works for both the CC and DSM models, and uses only $O(n)$ space.
%As with Giakkoupis and Woelfel's and Lee's algorithms, our $O(1)$ bound is on the amortized complexity,
%and not on the worst case complexity.
Thus, modulo amortization, our result answers affirmatively the 18 year old open problem stated
at the end of the Introduction.

To the best of our knowledge, our algorithm is the first to have constant (or even sub-logarithmic) amortized RMR complexity for the DSM model,
and is the only one to satisfy Fast Abort.
Our algorithm has two more features important in practice.
First, unlike earlier algorithms which need to know $n$, the maximum number of processes 
that will execute the algorithm concurrently, and need the process names to be $1, \ldots, n$,
our algorithm works for an arbitrary number of processes of arbitrary names.
Second, our algorithm---described in full detail---is only about a dozen lines long,
making it easy to implement.

Our algorithm employs the Fetch\&Store (FAS) operation (FAS$(X, v)$ changes shared variable $X$'s value to $v$ and returns $X$'s previous value).

%In terms of techniques employed, our algorithm is a ``queue'' lock in the spirit of the MCS lock \cite{MCS}
%and Craig's lock \cite{Craig}, where waiting processes arrange themselves in a queue using the Fetch\&Store (FAS) operation
%(FAS$(X, v)$ changes shared variable $X$'s value to $v$ and returns $X$'s previous value).
%When a process $p$ aborts, $p$ marks its node $x$ in the queue as aborted.
%When $p$ enters the Try section the next time, we employ Lee's clever idea \cite{Lee} whereby $p$ tries to 
%insert itself back into the queue simply by unmarking $x$.
%However, if $x$ was already spliced out from the queue by $x$'s successor, then and only then will $p$ add itself
%anew to the end of the queue.
%Our algorithm is innovated on top of these ideas.

\section{An $O(1)$ Algorithm for CC and DSM}

In this section, we present our abortable mutual exclusion algorithm (or simply, abortable lock)
that has $O(1)$ amortized RMR complexity for both the CC and DSM models.
To help the reader understand the algorithm,
below we first present the high level ideas and only later point the reader to the actual algorithm
and our line-by-line commentary of the algorithm.

\subsection{Intuitive Description of the Main Ideas and Their Representation}

Our algorithm is essentially a queue lock: in the Try section, as processes wait to acquire the lock, they wait in a queue.
When a process enters the Try section, it adds itself to the end of the queue.
The process that is at the front of this queue is the one that enters the CS.
When a process leaves the CS, it removes itself from the queue, and lets the next process,
which is now at the front of the queue, enter the CS.
In our informal description and in the statement of our invariant,
we let $Q$ denote this abstract ``process-queue'',
$k \ge 0$ denote the number of processes in $Q$, and
$q_1, q_2, \ldots, q_k$ denote the sequence of processes in $Q$, where $q_1$ is the front process and $q_k$ is the tail process.

This abstract process-queue $Q$ is represented in the algorithm by a list of nodes.
A node is simply a single word of shared memory.
If a set $P$ consisting of $n$ processes participate in the algorithm (some of which are in the Remainder section
and the others active in Try, Critical, Exit, or Abort sections), then the total set of nodes $N$ consists of $n+1$ nodes,
of which $n$ nodes are owned by the $n$ processes and the remaining node in not owned by any process.
Thus, at any point, each process owns one node and different processes own different nodes.
However, the node that a process owns and the node that is not owned by any process change with time.
In the algorithm, each process $p$ has a local variable $\mynode{p}$ that holds the address of the node that $p$ owns.

Turning our attention back to the abstract process-queue $Q$ of length $k$,
it is represented in the algorithm by a list of $k+1$ nodes whose addresses we denote by $a_0, a_1, \ldots, a_k$,
where $a_1, a_2, \ldots, a_{k}$ are the addresses of the nodes owned by $q_1, q_2, \ldots, q_k$, respectively,
and $a_0$ is the address of the node that is not owned by any process.
Thus, for all $i \in [1, k]$, $a_i = \mynode{q_i}$.
We call the list $a_0, a_1, \ldots, a_k$ the ``node-queue'', which closely corresponds to but distinct from the process-queue $Q$.
We say $q_{i+1}$ is $q_i$'s {\em successor process} and $q_{i-1}$ is $q_i$'s {\em predecessor process}.
Similarly, we call $a_{i+1}$ is $q_i$'s {\em successor node} and $a_{i-1}$ is $q_i$'s {\em predecessor node}.
Process $q_k$ has no successor node, but $q_1$ has $a_0$ as its predecessor node.
Henceforth, we simply use the terms successor and predecessor, and let it be inferred from the context 
whether we are referring to the process or to the node.

In the algorithm, each process $p$ has another local variable $\pred{p}$ through which 
$p$ remembers the address of its predecessor node.
In particular, for all $i \in [1, k]$, $\pred{q_i} = a_{i-1}$.
There is also a shared variable $\X$, which holds $a_k$, the address of the last node in the node-queue.
When a process $p$ enters the Try section, it performs a FAS$(\X, \mynode{p})$ and stores the return value in $\pred{p}$
so as to both add itself to the end of the queue and simultaneously remember its predecessor.

The node $a_0$ is the only node that may contain a special value denoted $\token$.
Once a process $p$ enters the Try section and adds itself to the queue, 
it checks if its predecessor node contains $\token$.
If it does, $p$ knows it is $q_1$ and enters the CS.
Otherwise, $p$ busywaits until its predecessor will inform $p$ that $p$'s turn to the enter the CS has come up.
In the algorithm, there is a shared variable called $\go{p}$ for this purpose---it is on this variable that $p$ busywaits.
For $p$'s predecessor to later inform $p$ that it may enter the CS,
the predecessor needs to know the address of $\go{p}$,
so $p$ deposits $\go{p}$'s address in its predecessor node before busywaiting.

The final high level idea concerns how a process aborts.
When a process $q_i$ in the Try section wishes to abort,
it leaves its node $a_i$ intact, but marks the node as ``aborted'' by writing the address of its predecessor node into its node,
i.e., by writing $a_{i-1}$ into $*a_i$.
By doing this writing with a FAS, $q_i$ simultaneously learns the address of $\go{q_{i+1}}$,
where its successor busy-waits, and wakes up the successor.
The successor $q_{i+1}$ then reads its predecessor node $a_i$, where it sees the address $a_{i-1}$
and infers that $q_i$ has aborted, so splices out $a_i$ from the queue by writing $\nil$ in $*a_i$,
and henceforth regarding $a_{i-1}$ as its predecessor.

Suppose that after $q_i$ aborted, it decides to invoke the Try section in a bid to acquire the lock.
Since it is possible that its node is not yet spliced out of the queue by its successor,
$q_i$ tries to reclaim its old spot in the queue by simply switching the value in its node from $a_{i-1}$ to $\nil$; a similar idea was employed by Lee \cite{Lee}.
If in the process, it notices that it was already spliced out,
the process adds its node to the end of the queue as in the normal course.
Otherwise, it has happily reclaimed its old spot in queue.

\subsection{The algorithm and line-by-line commentary}

Having elaborately described the main ideas and how they are represented and implemented by the local and shared variables,
we now refer the reader to the precise algorithm (Algorithm \ref{algorithm:DSM}) and
informally explain how it works by going over the code of an arbitrary process $p$ line by line.
A note about our convention on how lines are numbered: in a single step of the execution, a process 
performs an operation on a single shared variable, but can perform any number of local actions.
Therefore, in the figure, we numbered only those lines where an operation is performed on a {\em shared} variable.

Being at Line~1 amounts to being in the Remainder section.
At Line~1, $p$ is unsure whether it had aborted its previous attempt, but if it had aborted,
$p$ knows that it would have left $\pred{p}$ in its node.
So, to reclaim its old spot in the queue (in the event that it aborted its previous attempt
and its node has not yet been spliced out of the queue by its successor),
$p$ performs a FAS on its node (Line~1).
If the FAS returns $\pred{p}$, $p$ is sure it has reclaimed its old spot in the queue and proceeds to Line~3.
Otherwise, $p$ realizes that either it didn't abort its previous attempt or its aborted node has since been spliced out,
so $p$ appends its node afresh to the queue and records its predecessor node in $\pred{p}$ (Line~2 ).
Once in the queue, $p$ performs a FAS on its predecessor node to simultaneously
inform the predecessor of the address of its busy-wait variable and learn the value $v_p$ in the predecessor node (Line~3).
If $v_p$ is $\token$, $p$ infers that it is $q_1$, the front process in the wait-queue, so terminates the Try section and proceeds to the CS.
If $v_p$ is non-$\nil$ and not the address of $p$'s busy-wait variable,
then it must be the case that the predecessor aborted and $v_p$ has the address of the predecessor's predecessor.
In this case, $p$ knows that it spliced its predecessor out of the queue,
and updates its predecessor (this shortens the queue by one node, which is crucial to proving starvation-freedom).
Having updated its predecessor, $p$ proceeds to Line~6 to check what is in store at this new predecessor.
In the remaining case (when $v_p$ is either $\nil$ or the address of its busy-wait variable),
$p$ understands it has no option but to wait until woken by its predecessor.
So, it busy-waits (Line~4) and, once woken by its predecessor,
resets $\go{p}$ to prepare it for the any busy-wait in the future (Line~5),
and then moves on to inspect the predecessor node (Line~6) to determine why the predecessor woke it up.

\begin{algorithm*}[ht]
\caption{: Abortabale Mutual Exclusion Algorithm with Amortized $O(1)$ RMR Complexity.}
\label{algorithm:DSM}

\begin{algorithmic}[1]
\begin{footnotesize}

\Statex {\bf Shared Variables}
\Statex \hspace{\algorithmicindent} $\SENTINEL$: a node, initially $\token$
\Statex \hspace{\algorithmicindent} $\X$: holds a node address, initially $\&\SENTINEL$
\Statex \hspace{\algorithmicindent} When a process $p$ joins the protocol it allocates memory for:
\Statex \hspace{\algorithmicindent} \hspace{\algorithmicindent} $\node{p}$: holds $\nil$, $\token$, or an address in shared memory,  initially $\nil$
\Statex \hspace{\algorithmicindent} \hspace{\algorithmicindent} $\go{p}$: boolean, initially $\false$. (In DSM $\go{p}$ is in $p$'s partition of shared memory)

\Statex

\Statex {\bf Local Variables}
\Statex \hspace{\algorithmicindent} For each process $p$:
\Statex \hspace{\algorithmicindent} \hspace{\algorithmicindent} $\mynode{p}$: holds the address of a $\node{}$, initially $\&\node{p}$ 
\Statex \hspace{\algorithmicindent} \hspace{\algorithmicindent} $\pred{p}$: holds the address of a $\node{}$, initially $\&\node{p}$
\Statex \hspace{\algorithmicindent} \hspace{\algorithmicindent} $\temp{p}$: holds $\nil, \token$ or an address in shared memory, arbitrarily initialized 
\end{footnotesize}
\end{algorithmic}

\begin{multicols}{2}
\begin{footnotesize}
\begin{algorithmic}[1]
\Statex \hspace{\algorithmicindent}\underline{Section \mbox{\sc Try}$(p)$}
    \State \hspace{\algorithmicindent}{\bf if} {$\Swap{*\mynode{p}, \nil} \ne \pred{p}$} {\bf then}
    \State \hspace{\algorithmicindent}\hspace{\algorithmicindent}$\pred{p} \gets \Swap{\X, \mynode{p}}$
    \State \hspace{\algorithmicindent}$\temp{p} \gets \Swap{*\pred{p}, \&\go{p}}$

    \Statex \hspace{\algorithmicindent}{\bf while} $\temp{p} \ne \token$ {\bf do} 
    \Statex \hspace{\algorithmicindent}\hspace{\algorithmicindent}{\bf if} $\temp{p} \not\in \{\nil, \&\go{p}\}$ {\bf then} $\pred{p} \gets \temp{p}$
    \State \hspace{\algorithmicindent}\hspace{\algorithmicindent}{\bf else} {\bf wait till} $\go{p} = \true$
    %\State \hspace{\algorithmicindent}\hspace{\algorithmicindent}\hspace{\algorithmicindent}{\bf wait till} $\go{p} = \true$
    \State \hspace{\algorithmicindent}\hspace{\algorithmicindent}\hspace{\algorithmicindent}$\;\;\;\go{p} \gets \false$
    \State \hspace{\algorithmicindent}\hspace{\algorithmicindent}$\temp{p} \gets \Swap{*\pred{p}, \&\go{p}}$
\end{algorithmic}

\columnbreak
\begin{algorithmic}[1]
\setcounter{ALG@line}{6}
\Statex \hspace{\algorithmicindent}\underline{Section \mbox{\sc Exit}$(p)$}
    \State \hspace{\algorithmicindent}$\temp{p} \gets \Swap{*\mynode{p}, \token}$
    \Statex \hspace{\algorithmicindent}$\mynode{p} \gets \pred{p}$
    \Statex \hspace{\algorithmicindent}{\bf if} $\temp{p} \ne \nil$ {\bf then} 
    \State \hspace{\algorithmicindent}\hspace{\algorithmicindent}$*\temp{p} \gets \true$
\Statex

\Statex \hspace{\algorithmicindent}\underline{Section \mbox{\sc Abort}$(p)$}
\State \hspace{\algorithmicindent}$\temp{p} \gets \Swap{*\pred{p}, \nil}$
\Statex \hspace{\algorithmicindent}{\bf if} $\temp{p} = \token$ {\bf then }{\bf goto }{\sc Exit} (line 7) 
%\Statex \hspace{\algorithmicindent}\hspace{\algorithmicindent}{\bf goto }{\sc Exit} (line 8)
%\State \hspace{\algorithmicindent}\hspace{\algorithmicindent}$*\pred{p} \gets \token$
\Statex \hspace{\algorithmicindent}{\bf else if} $\temp{p} \not\in \{\nil, \&\go{p}\}$ {\bf then} $\pred{p} \gets \temp{p}$

\State \hspace{\algorithmicindent}\hspace{\algorithmicindent}\hspace{0.075in}{\bf if} $(\temp{p} \gets \Swap{*\mynode{p}, \pred{p}}) \ne \nil$ %{\bf then}
    \State \hspace{\algorithmicindent}\hspace{\algorithmicindent}\hspace{\algorithmicindent}\hspace{0.075in}$*\temp{p} \gets \true$
\end{algorithmic}

\end{footnotesize}
\end{multicols}

\end{algorithm*}

Once $p$ leaves the CS, it deposits the token in its node to signal its successor that it may enter the CS (Line~7).
If the FAS operation at Line~7 returns a non-$\nil$ value,
$p$ knows that the value must be the address where the successor is busy-waiting.
So, $p$ wakes up its successor (Line~8).
Importantly, the moment $p$ deposits $\token$ in its node at Line~7,
that node becomes the new $a_0$, the token holding node that does not belong to any process,
and $p$ grabs its predecessor node (the old $a_0$) as its own node.

In our algorithm, after $p$ receives the abort signal from the environment,
it is allowed to jump to the Abort section only after performing Line~3, or Line~4, or Line~5, or Line~6.
(In particular, if the abort signal comes from the environment when $p$ is at Line~3,
it is required to execute Line~3 and all local actions associated with Line~3 before jumping to the Abort section.)
At the start of the Abort section, $p$ erases the address of its busy-wait variable from its predecessor node
because $p$ is on the way out and no longer wants to be woken by the predecessor (Line~9).
However, if $p$ observes the token in the predecessor node,
$p$ knows that it has the permission enter the CS now.
However, since $p$ wishes to abort, it will sidestep CS and proceed directly to the Exit section
(and complete its abort by executing the Exit section and returning to the Remainder from there).
Another possibility is that $v_p \neq \&\go{p}$, which means that $p$'s predecessor aborted,
in which case $p$ updates its predecessor to $v_p$, which holds $p$'s predecessor's predecessor.
At this point, $p$ marks its node as aborted by writing in it $p$'s predecessor (Line~10).
A non-nil return value would be the address where $p$'s successor is busy-waiting,
so $p$ informs the successor of its departure by setting the successor's busy-wait variable (Line~11).

\section{Proof of Correctness}

\begin{figure}

\fbox{\begin{minipage}\textwidth
There is an integer $k \ge 0$, and a sequence 
$A = a_0,\ldots,a_k$ of $k+1$ addresses of distinct nodes and a sequence 
$Q = q_1,\ldots,q_k$ of $k$ distinct processes such that
\begin{enumerate}

    \item[$I_1)$] 
    $\X = a_k$

    \item[$I_2)$] 
    $\forall i \in [1,k], \, \mynode{q_i} = a_i$
    
    \item[$I_3)$] 
    $\forall i \in [1,k], \, \pred{q_i} = a_{i-1}$
    
    \item[$I_4)$]
    $N = \{a_0\} \cup \{\mynode{p} \mid p \in P\}$
    
    \item[$I_5)$]
    $\forall p \in P,\; \pred{p} \in N$
    
    \item[$I_6)$] 
    $\forall p \in P, \, \pc{p} \in \{2, 8\} \implies p \not\in Q$ 
    
    $\forall p \in P, \, \pc{p} = 2 \implies *\mynode{p} = \nil$ 
    
    $\forall p \in P, \, \pc{p} = 8 \implies *\mynode{p} \in \{\nil, \&\go{p}\}$ 
    
    \item[$I_{7})$] 
    If $k \ge 1$, then:

    \hspace{0.4in} $\pc{q_k} \in \{3,4,5,6,7,9,10\} \implies *\mynode{q_k} = \nil$
    
    \hspace{0.4in} $\pc{q_k} \in \{1, 11\}  \implies *\mynode{q_k} = \pred{q_k}$

    If $k \ge 2$, then for all $q_i \in \{q_1,\ldots,q_{k-1}\}$:
    
    \hspace{0.4in} $\pc{q_i} \in \{3,4,5,6,7,9,10\} \implies *\mynode{p} \in \{\nil, \&\go{q_{i+1}} \}$
    
    \hspace{0.4in} $\pc{q_i} \in \{1, 11\}  \implies *\mynode{q_i} = \pred{q_i}$
        
    \item[$I_8)$] 
    $\forall p \in P, \, p \not\in Q \implies (\pc{p} \in \{1, 2, 8, 11\} \, \wedge \, *\mynode{p} \ne \pred{p})$
    
    \item[$I_{9})$] 
    If $(k = 0 \, \vee \, \pc{q_1} \ne 7)$ then $*a_0 = \token$,
    else $*a_0 \in \{\nil, \go{q_1}\}$

    \item[$I_{10})$] 
    $\forall p \in P, \, p \ne q_1 \implies \pc{p} \ne 7$

    \item[$I_{11})$] 
   $(\pc{q_1} = 4 \, \wedge \, \go{q_1} = \false) \implies \exists p \in P, \, (\pc{p} = 8 \, \wedge \, v_p = \&\go{q_1})$
   
   $\forall i \in [2,k], \, ((\pc{q_i} = 4 \, \wedge \, \go{q_i} = \false) \implies $ \\
   \mbox{ }\hspace{0.63in} $((*a_{i-1} = \&\go{q_i}) \, \vee \, ((*a_{i-1} = \pred{q_{i-1}}) \, \wedge \, (\pc{q_{i-1}} = 11) \, \wedge \, (v_{q_{i-1}} = \&\go{q_i}))))$
   
   $\forall p \in P, \pc{p} \in \{8, 11\} \implies \temp{p} \in \{\&\go{p} \mid p \in P \}$

    \item[$I_{12})$] 
    $\forall p \in P, \, \pc{p} = 5 \implies \go{p} = \true$

\end{enumerate}

Note: by $I_4$, the queue of node addresses starts with the unique node address $a_0$ that is no process $p$'s $\mynode{p}$ and by $I_1$ ends with $\X$.
This together with $I_2$ and $I_3$, implies that $k$, $A$, and $Q$ are uniquely defined.
\caption{Invariant $I = \bigwedge_{j = 1}^{14} I_{j}$ is the main invariant of Algorithm \ref{algorithm:DSM}.}
\label{fig:invariant}
\end{minipage}}

\end{figure}

% line 3 serves a dual purpose of achieving starvation freedom and registering the go_p address.
% tricky possibility of go_p being set to true by random other process.
% Importance of removing go_p when aborting

We state the invariant of Algorithm \ref{algorithm:DSM} in Figure \ref{fig:invariant}.
Below we claim the correctness of this invariant---the proof is in Appendix A.

\begin{restatable}{theorem}{invariant} 
    The statement $I$ in Figure~\ref{fig:invariant}, which is the conjunction of $I_1, I_2, \ldots, I_{12}$ 
    is an invariant of Algorithm \ref{algorithm:DSM}. Furthermore,
   the quantities $k$, $A$, and $Q$ in the invariant $I$ are unique.
    \label{thm:invariantDSM}
\end{restatable}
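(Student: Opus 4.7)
The plan is to prove $I$ is an invariant by induction on the number of steps of execution. For the base case, in the initial configuration every process is at $\pc{p}=1$ and none has joined the protocol, so I take $k=0$, $Q$ empty, and $A$ consisting solely of the sentinel $a_0 = \&\SENTINEL$. Most clauses are vacuous; $I_1$, $I_4$, and $I_9$ follow from the initializations $\X = \&\SENTINEL$ and $*\SENTINEL = \token$, while $I_5$ holds because each newly-joining $p$'s $\pred{p}$ is initialized to $\&\node{p} \in N$. Whenever a new process joins, $k$, $A$, and $Q$ are unchanged but $N$ grows by $\&\node{p}$ (with $*\node{p} = \nil$ initially), and I verify that all clauses remain true.

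For the inductive step, I proceed line-by-line (lines 1--11) and, for each step, exhibit the updated $(k, A, Q)$ and check that all twelve clauses are preserved. The structurally significant cases are: (i) Line~2 appends $p$'s node to the tail of $Q$ via the $\Swap{\X,\mynode{p}}$, so $k$ increments and I must verify $I_1$--$I_5$ update correctly using the previous value of $\X$ that now becomes $\pred{p}$; (ii) Line~7 deposits $\token$ in $p$'s old node and reassigns $\mynode{p} \gets \pred{p}$, shifting the identity of the unowned node $a_0$ by one slot and removing $p=q_1$ from $Q$, so $k$ decrements; (iii) Line~10 marks an aborting $p=q_i$ by writing $\pred{q_i}$ into $*\mynode{q_i}$, which logically removes $p$ from $Q$ and forces a re-indexing of $q_{i+1},\ldots,q_k$, with the old contents of the predecessor node now being invisible to $p$'s successor until that successor executes Line~3 or Line~6 and follows the $\pred{}$ chain.

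The main obstacle will be preserving $I_7$ and $I_{11}$, which encode the bulk of the structural information about node contents and the busy-wait synchronization. For Line~1's reclamation $\Swap{*\mynode{p}, \nil}$, I split on whether the returned value equals $\pred{p}$: in the success case, the old aborted node is re-inserted into $Q$ at its former index, $*\mynode{p}$ becomes $\nil$, and I verify that $I_2, I_3, I_7$ remain consistent; in the failure case, $p$ falls through to Line~2 and $I_8$ still holds because $*\mynode{p}\ne\pred{p}$. For the $\Swap{*\pred{p},\&\go{p}}$ at Lines~3 and~6, I do a four-way case analysis on the returned value $\temp{p} \in \{\token,\; \nil,\; \&\go{p},\; \pred{q_{i-1}}\}$; the last case corresponds to the predecessor having aborted, and re-assigning $\pred{p} \gets \temp{p}$ (together with the implicit splice-out of the predecessor's node) preserves $I_3$ and $I_7$ for the shortened queue --- this is the delicate accounting step, because a single line simultaneously shortens $Q$ and updates $p$'s local state. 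The Line~9/10/11 abort sequence is handled similarly, with the extra subtlety that the goto on $\temp{p}=\token$ short-circuits to Line~7 and must preserve $I_9$ and $I_{10}$ about the uniqueness of the process at $\pc{}=7$.

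Finally, the uniqueness of $k$, $A$, and $Q$ is a corollary of the invariant rather than a separate claim: $I_4$ identifies $a_0$ as the unique member of $N$ that is no process's $\mynode{}$; $I_1$ pins $a_k = \X$; and $I_2$--$I_3$ force each intermediate $a_i$ and $q_i$ by following the backward chain of $\pred{}$ pointers from $\X$. I would discharge this observation once at the end of the induction, using it also to justify that the ``exhibit $(k,A,Q)$'' step at each inductive case is well-defined.
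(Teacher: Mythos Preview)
Your overall scaffolding---induction on steps, base case with $k=0$ and $a_0=\&\SENTINEL$, line-by-line case analysis, and deriving uniqueness from $I_1$--$I_4$---matches the paper exactly. However, you have the queue-membership dynamics of the abort path backwards, and this would break the proof.

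You claim that executing Line~10 ``logically removes $p$ from $Q$ and forces a re-indexing of $q_{i+1},\ldots,q_k$.'' It does not. After Line~10, $p=q_i$ \emph{remains} in $Q$ with $\pc{p}\in\{1,11\}$ and $*\mynode{p}=\pred{p}$; this is precisely the state described by the second clause of $I_7$. If you tried to remove $q_i$ at this point, $I_3$ would immediately fail for the renamed $q'_i$ (the old $q_{i+1}$), because $\pred{q_{i+1}}$ still equals $a_i=\mynode{q_i}$, not $a_{i-1}$. The actual removal of an aborted process from $Q$ happens only when its \emph{successor} executes Line~3, 6, or~9, reads $\pred{q_{i-1}}$ out of $*a_{i-1}$, and updates its own $\pred{}$---that is the step where $k$ decrements and the re-indexing occurs.

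The same misunderstanding surfaces in your treatment of Line~1: you describe the success case as ``the old aborted node is re-inserted into $Q$ at its former index.'' But if the FAS returns $\pred{p}$, then by the contrapositive of $I_8$ the process $p$ was \emph{already still in} $Q$ (its successor never spliced it out); there is nothing to re-insert. The step merely changes $*\mynode{p}$ from $\pred{p}$ to $\nil$ and moves $\pc{p}$ to~3, which is why $I_7$ continues to hold. Once you correct where aborted processes actually leave $Q$ (namely at the successor's Line~3/6/9, not at the aborter's Line~10), the rest of your plan goes through and coincides with the paper's argument.
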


\begin{corollary}
    Algorithm \ref{algorithm:DSM} satisfies mutual exclusion.
\end{corollary}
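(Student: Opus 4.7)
The plan is to derive mutual exclusion as a one-line consequence of invariant $I_{10}$ from Theorem~\ref{thm:invariantDSM}. The first step is to pin down exactly which program-counter value represents ``in the CS''. By the stated labeling convention, only shared-memory operations are numbered, so all local work—including the Critical Section itself—occurs silently between labeled lines. Inspecting Algorithm~\ref{algorithm:DSM}, a process enters the CS precisely at the moment the while-loop test $\temp{p} \ne \token$ fails (either immediately after Line~3, or after a Line~6 re-swap), and it leaves the CS by performing the FAS at Line~7. Since nothing between these two events touches shared memory, ``being in the CS'' coincides with $\pc{p} = 7$.

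Given this identification, the corollary is immediate: by Theorem~\ref{thm:invariantDSM}, $I$ holds in every reachable state and in particular so does $I_{10}$, which asserts $\forall p \in P,\; p \ne q_1 \implies \pc{p} \ne 7$. Since $Q = q_1,\ldots,q_k$ is a sequence of distinct processes and $q_1$ exists only when $k \ge 1$, there is at any reachable state at most one process $p$ with $\pc{p} = 7$. Thus at most one process is in the CS at any time, which is precisely mutual exclusion.

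The hard part is not in this corollary but in Theorem~\ref{thm:invariantDSM} itself: verifying that $I_{10}$ (together with the other twelve conjuncts, which mutually support one another—in particular $I_9$, which constrains $*a_0 = \token$ exactly when no process is at Line~7) is preserved by every step of every process. Once that theorem is in hand, the only non-trivial piece of the present corollary is the bookkeeping identification of the CS with $\pc{p} = 7$, which is a direct reading of the code.
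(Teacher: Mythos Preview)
Your proof is correct and takes the same approach as the paper: both derive mutual exclusion directly from invariant $I_{10}$, which forces any process at Line~7 to be $q_1$. Your version is more detailed in spelling out why ``in the CS'' corresponds to $\pc{p}=7$, but the core argument is identical.
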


\begin{proof}
    $I_{10}$ states that only process $q_1$ can be in the critical section.
\end{proof}

\begin{theorem}
    Algorithm \ref{algorithm:DSM} satisfies AFCFS where the doorway constitutes Lines 1 and 2 of the Try Section.
\end{theorem}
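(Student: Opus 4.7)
The plan is to prove AFCFS by directly tracking positions in the abstract queue $Q$ guaranteed by invariant $I$ of Theorem~\ref{thm:invariantDSM}.

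Fix passages $\pi$ by $p$ and $\pi'$ by $p'$ with $\pi$ $\A$-preceding $\pi'$. Let $t_0$ be the time $p$ finishes the doorway (Line~1 if the reclaim succeeds, otherwise Line~2) in the last attempt of $\pi$, and let $t_1$ be the time $p'$ takes its first step in $\pi'$; by hypothesis $t_0 < t_1$. If $p$ has already entered the CS by $t_1$, the claim is immediate, so I focus on the remaining case where $p$ is still in the Try section at $t_1$.

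First I would establish two enabling facts from $I$. (a) Immediately after $t_0$, $p \in Q$: Line~2 appends $\mynode{p}$ to the tail of $Q$, and a successful reclaim at Line~1 requires the FAS to return $\pred{p}$, which means $*\mynode{p} = \pred{p}$ just before, so by the contrapositive of $I_8$ we already have $p \in Q$, and the FAS only overwrites $*\mynode{p}$ with $\nil$ without moving $p$'s position. (b) From $t_0$ until $p$ enters the CS, $p$ remains in $Q$: a process leaves $Q$ only by becoming $q_1$ and executing Line~7 (which follows CS), or by being spliced by a successor, and splicing requires $*\mynode{p} = \pred{p}$, a pattern written only by Line~10 of Abort---which $p$ never executes in $\pi$'s successful attempt.

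The heart of the argument is the following inductive invariant, to be maintained over all atomic steps from $t_1$ until $p$ enters the CS: \emph{whenever $p' \in Q$, $p$ is also in $Q$ and at a strictly smaller position index}. The base case at $t_1$ is vacuous because $p'$ has just left the Remainder, so $p' \notin Q$. For the inductive step I would case-split on the next atomic step. Non-$p'$ steps can only splice out nodes (shifting position indices downward uniformly) or extend the tail, neither of which reorders $p$ and $p'$. Among $p'$-steps, Line~2 puts $p'$ at the new tail, strictly after $p$ by (b); a successful reclaim at Line~1 in a later attempt of $\pi'$ forces $p' \in Q$ already via $I_8$, and reclaim preserves position; an unsuccessful reclaim at Line~1 simply redirects $p'$ to Line~2; Line~10 of Abort leaves $p'$'s position in $Q$ unchanged; and the remaining $p'$-steps (Lines~3--6, 9, 11) do not alter the membership or ordering of $p$ versus $p'$. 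Once the invariant is in hand, the conclusion is immediate: $p'$ can enter the CS only as $q_1$, which the invariant forbids while $p \in Q$, and by (b) $p$ stays in $Q$ until it enters the CS.

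The main obstacle I anticipate is the interaction between Line~1 reclaim and the splicing performed at Line~6 by $p'$'s successor on $p'$'s old node. The subtle point, requiring $I_7$, is that when the successor splices $p'$ out it simultaneously overwrites $*\mynode{p'}$ with an address of the form $\&\go{r}$ for its own busy-wait variable, so that any subsequent Line~1 FAS by $p'$ returns a value distinct from $\pred{p'}$ and therefore cannot spuriously re-insert $p'$ at its old, possibly advantageous, location ahead of where a fresh Line~2 append would land. Once this bookkeeping is pinned down, the remaining cases are routine.
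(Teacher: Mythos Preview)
Your argument is correct and rests on the same underlying idea as the paper: the order of processes in $Q$ reflects the temporal order in which they appended via Line~2, this order is never perturbed by splicing or reclaiming, and by $I_{10}$ only $q_1$ can be in the CS.

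The decomposition differs. The paper proves a single global invariant by induction on steps: \emph{for every pair $i<j$ currently in $Q$, $q_i$ started its passage before $q_j$ finished its doorway}. With that formulation the only nontrivial inductive case is a fresh append at Line~2, and the conclusion follows immediately by contraposition. You instead fix the specific pair $(p,p')$ and carry forward the invariant ``whenever $p'\in Q$, $p$ is ahead of it'', which forces you to enumerate reclaim, abort, splice, and append cases for both processes. Your version is more explicit---it surfaces the facts (a) and (b) that the paper leaves implicit---but the global-invariant route is shorter because it handles all pairs simultaneously and avoids the per-pair case split. One small simplification: the obstacle you flag in your last paragraph (reclaim after being spliced out) is dispatched directly by $I_8$, which guarantees $*\mynode{p'}\neq\pred{p'}$ whenever $p'\notin Q$, so Line~1 cannot spuriously reclaim; you do not need the $I_7$-based bookkeeping about the successor's $\&\go{r}$ overwrite.
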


\begin{proof}
    In order to show this property, we prove the following stronger statement by induction:
    {\em if process $p = q_i$ in the queue and process $p' = q_j$ in the queue with $j > i$,
    then $p$ started its passage before $p'$ finished its doorway.}
    This statement is true initially since $Q$ is empty.
    The statement continues to hold inductively whenever any process leaves $Q$ from any position.
    Thus, we left with the case where a new process $p$ enters $Q$.
    Since this case occurs only if $p$ executes Line 2, and thereby just finishes the doorway,
    it is clear that every other process in $Q$ has started its passage before $p$ finished its doorway in its current passage.
    Now in conjuction with our inductive statement, we observe that by $I_{10}$, only the first process in the $Q$, $q_1$ , can be in the CS; that finishes the proof.
\end{proof}

%The next lemma makes a simple observation that the position of a process in the wait queue never gets worse with time.
%
%\begin{lemma}
%If a process $p$ is in the $i$th position in the wait queue (i.e., $p = q_i$ in the notation in which the invariant $I$ is stated)
%and $p$'s position in the queue becomes $i'$ after a process takes a step, then $i' \le i$.
%\label{pos}
%\end{lemma}

\subsection{Proof of starvation freedom}

To prove starvation freedom, we define a distance function $\delta$ that maps each process $p$ in the Try section 
to a positive integer $\delta(p)$ that represents how far away $p$ is from entering the CS.
By our definition of $\delta$, the minimum value possible for $\delta(p)$ is 1, and it is attained exactly when $p$ is in the CS.
To show that $p$ will eventually enter the CS if it does not abort,
we prove that if $\delta(p) > 1$, there is a nonempty set $\Psi(p)$ of ``promoter'' processes in the Try, Exit, or Abort sections such that
(i) if a process from $\Psi(p)$ takes the next step, $\delta(p)$ decreases, and
(ii) if a process not in $\Psi(p)$ takes the next step, $\delta(p)$ does not increase and the promoters set $\Psi(p)$ remains unchanged.
Since a process from $\Psi(p)$ must eventually take a step, $\delta(p)$ is guaranteed to eventually decrease.
By repeatedly applying this argument, we see that $\delta(p)$ eventually attains the minimum value of 1, at which point $p$ enters the CS.

Our distance function $\delta$ is based on a carefully crafted auxiliary function $f$ that maps each process $r \in Q$ to a decimal digit,
based on $r$'s program counter $\pc{r}$ and the value of its $\go{r}$ variable, as follows.

  \begin{equation}
    f(r) =
    \begin{cases*}
       3 & if $\pc{r} = 1$ \\
       2 & if $\pc{r} = 3$ \\
       8 & if $\pc{r} = 4 \, \wedge \, \go{r} = \true$ \\
       9 & if $\pc{r} = 4 \, \wedge \, \go{r} = \false$ \\
       7 & if $\pc{r} = 5$ \\
       2 & if $\pc{r} = 6$ \\
       1 & if $\pc{r} = 7$ \\
       6 & if $\pc{r} = 9$ \\
       5 & if $\pc{r} = 10$
    \end{cases*}
  \end{equation}
  
Since a process in $Q$ cannot be at Lines 2 or 8 (by $I_6$), we didn't specify $f(r)$ for $\pc{r} \in \{2,8\}$.
For a process $p$ in the Try section, we are now ready to define $p$'s distance from CS $\delta(p)$, and $p$'s 
promoters set $\Psi(p)$.

\begin{definition}[Distance function $\delta$ and Promoters set $\Psi$]
Let $p$ be a process in the Try section or CS (i.e., $\pc{p} \in \{3, 4, 5, 6, 7 \}$).
It follows from $I_8$ that $p \in Q$.
Let $i \ge 1$ be $p$'s position in $Q$ (i.e., $p = q_i$), and let $m = \min \{j \mid 1 \le j \le i, \, \pc{q_j} \neq 1 \}$.
(Since $\pc{q_i} \in \{3, 4, 5, 6, 7\}$, $m$ is well defined.)
Then:
\begin{itemize}
\item
$p$'s distance from CS, $\delta(p)$, is defined as the $i$-digit decimal number $d_1d_2 \ldots d_i$, 
where each digit $d_j$, for $1 \le j \le i$, is specified as follows:
  \begin{equation*}
    d_j =
    \begin{cases*}
       f(q_j) = 3 & if $j < m$ \\
       f(q_m) & if $j = m$ \\
      0 & if $j > m$
    \end{cases*}
  \end{equation*}
  
For example, if $i = 4$, $m = 3$, and $\pc{q_3} = 5$, then $\delta_{p} = 3370$.
  
\item
$p$'s set of promoters, $\Psi(p)$, is defined by

  \begin{equation*}
    \Psi(p) =
    \begin{cases*}
       \{r \in P \mid \pc{r} \in \{8, 11\} \, \wedge \, \temp{r} = \&\go{q_m}\} & if $\pc{q_m} = 4 \, \wedge \, \go{q_m} = \false$ \\
       \{q_m\} & otherwise
    \end{cases*}
  \end{equation*}
  
\end{itemize}
\end{definition}

\noindent
The next lemma follows easily from the definition of $\delta(p)$.

\begin{lemma}
For any process $p \in Q$ in the Try section or CS, $\delta(p) \ge 1$, and $\delta(p) = 1$ if and only if $p$ is in the CS.
\label{mindist}
\end{lemma}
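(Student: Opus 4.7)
The plan is to read the lemma off essentially directly from the definition of $\delta(p)$: I will argue that the decimal representation $d_1 d_2 \ldots d_i$ never has a leading zero, and then pin down exactly when it can reduce to the single digit $1$.

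First I will observe that the auxiliary function $f$ is defined on every process that can appear in $Q$ --- here I use $I_6$ to rule out $\pc{r} \in \{2, 8\}$ for any $r \in Q$ --- and that the range of $f$ lies in $\{1, 2, 3, 5, 6, 7, 8, 9\}$, so $f(r) \ge 1$ always, with equality precisely when $\pc{r} = 7$. To lower-bound $\delta(p)$ I then split on whether $m = 1$: if $m > 1$, the definition of $m$ forces $\pc{q_1} = 1$, so $d_1 = f(q_1) = 3$; if $m = 1$, then $d_1 = f(q_m) \ge 1$ by the previous step. In either case the leading digit is nonzero, so $\delta(p) \ge 10^{i-1} \ge 1$.

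For the equivalence with $p$ being in the CS, the $(\Leftarrow)$ direction is direct: if $\pc{p} = 7$ then $I_{10}$ forces $p = q_1$, so $i = m = 1$ and $\delta(p) = f(p) = 1$. For $(\Rightarrow)$ I use the sharper estimate $\delta(p) \ge 10^{i-1}$: assuming $\delta(p) = 1$ forces $i = 1$, hence $m = 1$, hence $\delta(p) = f(p)$, and consulting the table of $f$ shows this equals $1$ only when $\pc{p} = 7$, i.e., when $p$ is in the CS.

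The argument is essentially bookkeeping and I do not anticipate any genuine obstacle. The only care required is to exclude spurious leading zeros when $m > 1$ (handled by the case split on $m$) and to invoke $I_{10}$ so that a process at program counter $7$ is necessarily at the head of $Q$; everything else is a direct lookup in the definitions of $f$ and $\delta$.
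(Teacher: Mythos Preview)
Your proposal is correct and matches the paper's approach: the paper offers no proof beyond the remark that the lemma ``follows easily from the definition of $\delta(p)$,'' and your case split on $m$ together with the table lookup for $f$ is exactly the natural way to expand that remark. (One minor quibble: your claim that $f$ is defined on every process in $Q$ slightly overstates things since $f$ is also unspecified at $\pc{r}=11$, which $I_6$ does not exclude; but this is an omission in the paper's own setup, and your argument only ever evaluates $f$ at $q_j$ with $j\le m$, so it does not affect the proof.)
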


The values of $\delta(p)$, $\Psi(p)$, and $\pc{p}$ can change from one configuration to the next.
So, in contexts such as the next lemma where we need to refer to these values in more than one configuration,
we add a configuration $C$ as an extra parameter and denote these values as
$\delta(p, C)$, $\Psi(p, C)$, and $\pc{p}(C)$.

\begin{restatable}{lemma}{starvationLemma} 
Suppose that a process $p \in Q$ is in the Try section in a configuration $C$ (i.e., $\pc{p}(C) \in \{3, 4, 5, 6\}$) and 
some process $\pi$ (possibly the same as $p$) takes a step from $C$.
Let $C'$ denote the configuration immediately immediately after $\pi$'s step.
\begin{enumerate}
\item
If $\pi \in \Psi(p, C)$, then $\delta(p, C') < \delta(p, C)$.
\item
If $\pi \not \in \Psi(p, C)$, then either $\delta(p, C') < \delta(p, C)$ or $(\delta(p, C') = \delta(p, C) \, \wedge \, \Psi(p, C') = \Psi(p, C))$.
\end{enumerate}
\label{reduce}
\end{restatable}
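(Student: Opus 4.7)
The plan is to prove the lemma by case analysis on $\pi$'s identity and its program counter in configuration $C$, using the invariant $I$ from \Cref{thm:invariantDSM} to fix the values of shared variables and to rule out unreachable configurations. The pivotal observation is that $\delta(p, C)$ depends solely on $m$ and $f(q_m)$: by construction, $d_j = 3$ for $j < m$, $d_m = f(q_m)$, and $d_j = 0$ for $j > m$; similarly, $\Psi(p, C)$ depends only on $\pc{q_m}(C)$, $\go{q_m}(C)$, and (in Case A below) on which processes carry $\&\go{q_m}$ in their $\temp$ registers. Hence, to affect either quantity, $\pi$'s step must touch $\pc{q_j}$ for some $j \le m$, or $\go{q_m}$, or the $\temp$ of some process at Line 8 or Line 11.

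For part 1, I split on the two cases defining $\Psi(p)$. In Case A, where $\pc{q_m}(C) = 4$ and $\go{q_m}(C) = \false$, any $\pi \in \Psi(p,C)$ is about to execute Line 8 or Line 11 with $\temp{\pi} = \&\go{q_m}$; its step sets $\go{q_m}$ to \true\ while leaving every PC in $Q$ unchanged, so $f(q_m)$ drops from $9$ to $8$ and $\delta(p)$ decreases by exactly $10^{i-m}$. In Case B, $\pi = q_m$, and I case-split on $\pc{q_m}(C)$. Applying $I_9$ when $m = 1$ forces $*\pred{q_1} = \token$; applying $I_7$ to $q_{m-1}$ (whose PC is $1$ by the minimality of $m$) when $m \ge 2$ forces $*\pred{q_m} = a_{m-2}$. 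These identifications collapse every step of $q_m$ into one of three patterns: (i) $m = 1$ and $q_m$ reads $\token$ at Line 3 or Line 6, entering the CS and dropping $f(q_m)$ to $1$; (ii) $m \ge 2$ and $q_m$ reads $a_{m-2}$ at Line 3, Line 6, or Line 9, locally updates $\pred{q_m}$, and thereby splices $q_{m-1}$ out of $Q$, so the queue shortens and $\delta(p)$ loses a digit; or (iii) $q_m$ performs a purely local PC advance (e.g.\ $4 \to 5$ with $\go = \true$, $5 \to 6$, $9 \to 10$) that lowers $f(q_m)$ in accordance with the numerical order hard-wired into $f$'s definition. In every case $\delta(p)$ strictly decreases.

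For part 2, I show that any step by $\pi \not\in \Psi(p, C)$ either preserves both $\delta(p)$ and $\Psi(p)$ or strictly decreases $\delta(p)$. Steps by processes not in $Q$---Line 1 by a process that fails to reclaim, Line 2 appending to the tail, or Line 8 or Line 11 waking an unrelated $\go$---touch neither the first $m$ $\pred$-pointers nor $\go{q_m}$, so both quantities are preserved. Steps by $q_j$ at positions $j < m$ can come only from $q_j$ at PC 1 executing Line 1; $q_j$ either reclaims its spot (moving to PC 3 and lowering the $j$-th digit of $\delta(p)$ from $3$ to $2$) or abandons it (moving to PC 2, shrinking $Q$ by one), and in both cases $\delta(p)$ strictly decreases. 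Steps by $q_j$ at positions $j > i$ alter only the suffix of $Q$ beyond $p$ and leave the prefix determining $\delta(p)$ and $\Psi(p)$ intact.

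The principal difficulty is Case B of part 1: each FAS at Lines 3, 6, 9, and 10 has several possible sub-outcomes depending on what $q_m$ reads from its predecessor or its own node, and the invariants $I_7$ and $I_9$ must be wielded carefully to rule out all possibilities other than the three uniform patterns---CS entry, splice-and-shrink, or local advance---that drive the strict decrease in $\delta(p)$.
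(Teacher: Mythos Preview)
Your approach mirrors the paper's: split Part~1 on whether $\pc{q_m}(C) = 4 \wedge \go{q_m}(C) = \false$, then inside Case~B sub-case on $\pc{q_m}$ using $I_9$ (when $m=1$) and $I_7$ applied to $q_{m-1}$ (when $m \ge 2$) to determine FAS outcomes; for Part~2, a position-based split. The observation that $\delta(p)$ depends only on $i$, $m$, and $f(q_m)$ is exactly what drives the paper's argument as well.

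That said, your three patterns in Case~B are not exhaustive. You omit $\pc{q_m} = 7$: here $q_m = q_1$ (by $I_{10}$) and its step at Line~7 removes $q_1$ \emph{itself} from $Q$, shortening the queue by one --- this is neither a splice of the predecessor nor a local advance, and the paper treats it as its own sub-case. You also omit $\pc{q_m} \in \{10, 11\}$, both reachable for processes in $Q$; the paper folds these into an ``anything else'' clause where $f(q_m)$ drops by design of $f$. Your pattern~(i) should additionally cover Line~9 when $m = 1$ (by $I_9$ the FAS returns $\token$ and PC jumps to~7, not~10). In Part~2 you handle $\pi \notin Q$, $j < m$, and $j > i$, but omit both $\pi = q_m$ in Case~A (where $q_m \notin \Psi$ since $\pc{q_m}=4$; its busy-wait is a no-op, so $\delta$ and $\Psi$ are unchanged) and the range $m < j \le i$. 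Finally, the ``abandon'' branch you list for $q_j$ with $j < m$ cannot occur: by $I_7$, $*\mynode{q_j} = \pred{q_j}$, so the test at Line~1 is false and $q_j$ always moves to PC~3, never PC~2.
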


\noindent
Lemma \ref{reduce}, which is used in the next theorem, is proved in Appendix B.

\begin{theorem}[Starvation Freedom]
If a process $p \in Q$ is in the Try section (i.e., $\pc{p} \in \{3, 4, 5, 6\}$) and does not abort, it eventually enters the CS.
\label{sf}
\end{theorem}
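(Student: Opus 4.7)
The plan is to argue by contradiction, using Lemma \ref{reduce} as a monovariant and the starvation-freedom hypotheses to force progress. Suppose, for contradiction, that $p$ remains in the Try section forever, never entering the CS and never aborting.

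First, I would observe that $\delta(p, C)$ is a non-increasing sequence of positive integers as $C$ ranges over the configurations reached after $p$ enters Try: by both parts of Lemma \ref{reduce}, $\delta(p, C') \le \delta(p, C)$ for every step $C \to C'$. Lemma \ref{mindist} gives $\delta(p, C) \ge 1$, and since $p$ never enters the CS we in fact have $\delta(p, C) \ge 2$. Being a non-increasing sequence of positive integers, the values must stabilize at some $d \ge 2$ past some configuration $C^*$. An immediate consequence, using both parts of Lemma \ref{reduce}, is that $\Psi(p, C)$ is also constant past $C^*$: any step by a process in $\Psi(p, C)$ would strictly decrease $\delta(p)$ by Lemma \ref{reduce}(1), contradicting stabilization; so every step after $C^*$ is by a non-promoter, and by the second clause of Lemma \ref{reduce}(2), such a step leaves $\Psi(p)$ unchanged. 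Call the stabilized set $\Psi^*$.

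Next, I would verify that $\Psi^*$ is nonempty. Unpacking the definition, either $\Psi(p) = \{q_m\}$ (trivially nonempty), or $\pc{q_m} = 4$ and $\go{q_m} = \false$; in the latter case invariant $I_{11}$ supplies a process $r$ with $\pc{r} \in \{8, 11\}$ and $\temp{r} = \&\go{q_m}$ (using $I_7$ and the minimality of $m$ to eliminate the disjunct $*a_{m-1} = \&\go{q_m}$ when $m \ge 2$, since that disjunct would conflict with $*a_{m-1} = \pred{q_{m-1}}$ forced by $\pc{q_{m-1}} = 1$). Crucially, every promoter lies outside the Remainder section: $q_m$ occupies some line in $\{3, 4, 5, 6, 7, 9, 10\}$, and wakers live at lines $8$ or $11$.

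Finally, I would derive the contradiction via fairness. Fix any $r \in \Psi^*$. By stabilization of $\Psi$, $r$ has not taken a step after $C^*$---otherwise that step would have been a promoter step, dropping $\delta(p)$ below $d$ by Lemma \ref{reduce}(1). But the starvation-freedom hypothesis guarantees that no process stays in the CS forever and no process stops taking steps in the Try, Exit, or Abort sections, and by the case analysis above $r$ occupies one of these. So $r$ eventually takes a step, which is a promoter step, strictly reducing $\delta(p)$ below $d$---contradicting stabilization. Hence $p$ must enter the CS. The heavy lifting sits in Lemma \ref{reduce}; with that in hand, the theorem is a short monovariant argument, and the only delicate point I anticipate is confirming nonemptiness of $\Psi^*$ in the busy-waiting case via the joint use of $I_{11}$ and $I_7$.
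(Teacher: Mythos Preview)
Your proposal is correct and follows essentially the same monovariant strategy as the paper: use Lemma~\ref{reduce} to show $\delta(p)$ is non-increasing, and use nonemptiness of $\Psi(p)$ together with the fairness hypotheses to force a strict decrease. The paper's own proof is terser---it asserts without elaboration that ``some process in $\Psi(p,C)$ eventually takes a step''---whereas you make explicit both the stabilization of $\Psi$ and its nonemptiness via $I_{11}$ and $I_7$; this extra care is welcome, and in fact your observation that for $m\ge 2$ the hypothesis $\pc{q_{m-1}}=1$ combined with $I_7$ rules out $*a_{m-1}=\&\go{q_m}$ actually shows (together with the second disjunct of $I_{11}$ requiring $\pc{q_{m-1}}=11$) that the busy-waiting case $\pc{q_m}=4\wedge\go{q_m}=\false$ can only occur when $m=1$, at which point $I_{11}$ directly supplies the waker.
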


\begin{proof}
Let $C$ be a configuration where $\pc{p} \in \{3, 4, 5, 6\}$.
Since $p$ is not in the CS in $C$, , it follows from Lemma~\ref{mindist} that $\delta(p) > 1$ in $C$.
By Lemma~\ref{reduce}, as processes take steps from $C$, $\delta(p)$ never increases.
It cannot remain the same forever because some process in $\Psi(p,C)$ eventually takes a step, 
causing $\delta(p)$ to decrease (by Lemma~\ref{reduce}).
Thus, $\delta(p)$ keeps decreasing as the execution progresses,
until eventually hitting the minimum possible value of 1, at which point $p$ is in the CS (by Lemma~\ref{mindist}).
\end{proof}

\section{Complexity Analysis}

In this section we analyze the space and RMR complexities of Algorithm \ref{algorithm:DSM}.

\begin{theorem}
    Algorithm \ref{algorithm:DSM} has a $O(n)$ space complexity, where $n$ is the total number
    of processes that join the protocol.
\end{theorem}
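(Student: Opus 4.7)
The plan is to count memory usage directly, exploiting the fact that the algorithm performs no dynamic allocation beyond the one-time allocation that each process performs when joining the protocol. Concretely, I would first itemize the memory footprint per process and globally: the only per-process shared memory is the single-word node $\node{p}$ and the boolean $\go{p}$ allocated at join time; the only per-process local state is the three words $\mynode{p}$, $\pred{p}$, and $\temp{p}$; and the only global shared state is the sentinel node $\SENTINEL$ and the pointer $\X$. This gives $5n + 2 = O(n)$ words total.

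The key justification needed is that no additional nodes (or any other memory) are ever created during execution. Inspection of the pseudocode confirms that none of Lines 1--11 invoke $\malloc$ or any allocator; the apparent ``new'' node that appears at the head of the queue after a successful Exit is not freshly allocated but simply a relabeling of an existing node, since Line~7's local action $\mynode{p} \gets \pred{p}$ transfers ownership of an already-allocated node rather than creating one. This invariant of the node set is precisely what $I_4$ from Figure~\ref{fig:invariant} records: at every reachable configuration $N = \{a_0\} \cup \{\mynode{p} \mid p \in P\}$, so the set of nodes ever in simultaneous use has cardinality exactly $|P|+1 = n+1$.

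Putting these observations together, the proof is a two-line argument: (i) cite $I_4$ to conclude that the node pool never exceeds $n+1$ entries, each of constant size; (ii) add the $O(1)$ per-process non-node state ($\go{p}$, $\mynode{p}$, $\pred{p}$, $\temp{p}$) and the $O(1)$ global state ($\X$), yielding the $O(n)$ bound. I do not anticipate a genuine obstacle here; the only thing to be careful about is phrasing the argument so it is clear that nodes previously belonging to aborted or exited processes are reused rather than abandoned, which is exactly what the handover at Line~7 and the reclamation attempt at Line~1 accomplish, and which $I_4$ certifies for all reachable configurations.
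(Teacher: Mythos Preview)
Your proposal is correct and takes essentially the same approach as the paper: the paper's proof is the single sentence ``The algorithm simply needs a constant number of local and shared variables per process, plus one $\SENTINEL$ node,'' and your argument is a more detailed elaboration of exactly this count. Your invocation of $I_4$ to certify that the node pool never grows is a nice touch that the paper leaves implicit, but the underlying reasoning is identical.
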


\begin{proof}
    The algorithm simply needs a constant number of local and shared variables per process,
    plus one $\SENTINEL{}$ node.
\end{proof}

We now wish to show that a process $p$ performs only an amortized constant number of RMRs per attempt in Algorithm \ref{algorithm:DSM} in both the CC and DSM cost models.
We analyze complexity by the potential method, by defining two different potential functions $\Phi_{CC}$ and $\Phi_{DSM}$.

We start by motivating the definition of $\Phi_{DSM}$, the simpler of the functions.
Since we are proving a constant bound,
we must show that each iteration of $p$'s while-loop (lines 4, 5, and 6) is paid for by some other action.
Since, $\go{p}$ is in $p$'s partition of memory, lines 4 and 5 have no cost in the DSM model, so we focus on line 6.
The two ways that $p$ can get to line 6 are:
(1) $\go{p}$ becomes $\true$ when $p$ is busy-waiting on line 4, and
(2) a $\Swap{}$ on either line 3 or 6 successfully removes an aborted node from the linked list.
So, we charge the executions of line 6 to processes that write $\true$ to $\go{p}$ or abort a node by writing $\pred{p}$ in $*\mynode{p}$.
This gives rise to the definition:
\[\Phi_{DSM} = \sum_{p \in P} \indic{\go{p} = true} + \indic{\pc{p} = 6} + \indic{*\mynode{p} = \pred{p}}\]
\noindent
In this definition $\indic{prop}$ is an indicator---it equals one if $prop$ is $\true$ and zero otherwise.
Note that $\Phi_{DSM}$ is a {\em proper potential function}, since it is zero in the initial configuration and always non-negative.
We now state a lemma (proved in Appendix C) that bounds the amortized cost in the DSM model of line $\ell$, $\alpha_{DSM}(\ell)$, 
for each $\ell \in [1,11]$.

\begin{restatable}{lemma}{dsmAmortizedComplexity}
    $\alpha_{DSM}(\ell) \le 1$ for lines $\ell \in [1,3] \cup \{7,9\}$,
    $\alpha_{DSM}(\ell) \le 2$ for lines $\ell \in \{8,10,11\}$, and 
    $\alpha_{DSM}(\ell) \le 0$ for $\ell \in [4,6]$.
\label{lem:DSMamortizedCosts}
\end{restatable}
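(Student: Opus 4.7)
The plan is a case analysis, line by line: for each $\ell \in [1,11]$, compute the actual RMR cost of executing $\ell$ in the DSM model and the change $\Delta\Phi_{DSM}$ in the potential caused by that execution, and then verify that $\alpha_{DSM}(\ell) = \mathrm{actual}_\ell + \Delta\Phi_{DSM}$ meets the stated bound.

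First I would pin down the actual costs. Lines 4 and 5 touch only $\go{p}$, which by construction resides in $p$'s local partition, so they cost $0$ RMRs. Every other line executes exactly one shared-memory operation and therefore costs at most $1$ RMR. Given these baselines, it suffices to show $\Delta\Phi_{DSM} \le 0$ for $\ell \in \{1,2,3,7,9\}$, $\Delta\Phi_{DSM} \le 1$ for $\ell \in \{8,10,11\}$, $\Delta\Phi_{DSM} \le 0$ for $\ell \in \{4,5\}$, and $\Delta\Phi_{DSM} \le -1$ for $\ell = 6$.

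The core of the analysis is tracking which of the three indicator families can change per line. The flag $\indic{\go{r} = \true}$ is set only by the explicit write of $\true$ on line 8 or 11, and cleared only by $r$'s own line 5. The flag $\indic{\pc{r} = 6}$ changes only when $r$ itself transitions into or out of the wait-loop swap. The flag $\indic{*\mynode{r} = \pred{r}}$ is set to $1$ only by line 10, and cleared whenever some process overwrites that node via line 1, 3, 6, or 9. With this bookkeeping, lines 1, 2, 7, 8, 9, 10, and 11 all follow by direct inspection, using the invariants of Figure~\ref{fig:invariant} to enumerate the reachable values of the relevant shared variables. Line 4 is immediate, since a busy-wait step changes no shared state. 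Line 3 requires a small case split: when the $\Swap{}$ returns some $\pred{q_{i-1}}$ (indicating that $p$'s predecessor has aborted), the $-1$ from the predecessor's $\indic{*\mynode{q_{i-1}} = \pred{q_{i-1}}}$ exactly offsets the $+1$ from $\pc{p}$ moving to $6$; in all other cases neither flag changes. Line 5 invokes $I_{12}$: $\pc{p} = 5$ forces $\go{p} = \true$ on entry, so the $-1$ from setting $\go{p} \gets \false$ cancels the $+1$ from $\pc{p}$ advancing to $6$.

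The main obstacle is line 6, which incurs an actual RMR cost of $1$ and must be paid by a potential drop of at least $1$. Invariants $I_7$ and $I_9$ pin down the return value of $\Swap{*\pred{p}, \&\go{p}}$ to one of $\token$, $\nil$, $\&\go{p}$, or $\pred{q_{i-1}}$ where $p = q_i$. In the first three cases the subsequent local check drives $\pc{p}$ out of $6$ (to line $7$ or line $4$), so $\indic{\pc{p}=6}$ falls from $1$ to $0$. In the last case $\pc{p}$ returns to $6$, but the value being overwritten was precisely what made $\indic{*\mynode{q_{i-1}} = \pred{q_{i-1}}}$ equal to $1$ (this is where the $I_7$ clause for $\pc{q_{i-1}} \in \{1,11\}$ is essential), so that indicator drops to $0$ instead. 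Either way $\Delta\Phi_{DSM} \le -1$, yielding $\alpha_{DSM}(6) \le 0$ and completing the lemma.
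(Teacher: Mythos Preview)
Your proposal is correct and follows essentially the same line-by-line potential argument as the paper's proof. If anything, your treatment of line~3 is slightly more careful than the paper's: you explicitly handle the sub-case where the \textsc{FAS} return value is $\pred{q_{i-1}}$ and $\pc{p}$ moves to~6, showing that the $+1$ from $\indic{\pc{p}=6}$ is cancelled by the $-1$ from $\indic{*\mynode{q_{i-1}}=\pred{q_{i-1}}}$, whereas the paper simply asserts that line~3 ``can only decrease the potential'' without isolating this sub-case.
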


The CC model is more complicated to analyze than the DSM model since:
(1) line 4 can now have a real cost if $\go{p}$ is not in $p$'s cache, and
(2) line 5 always has a real cost, since it is a write operation, and causes $\go{p}$ to become uncached.
To deal with (1), we define $\notcached{p}$ to be the predicate that $\go{p}$ is not in $p$'s cache, and add the indicator $\indic{\notcached{p}}$ to $\Phi_{CC}$.
To deal with (2), we simply multiply the weight of the indicator that $\go{p} = \true$, to pay for the
additional costs incurred on line 5.
This results in the definition:
\[\Phi_{CC} = \sum_{p \in P} 3\times\indic{\go{p} = true} + \indic{\pc{p} = 6} + \indic{*\mynode{p} = \pred{p}} + \indic{\notcached{p}} \]
\noindent
At the cost of charging one unit to a process that is newly joining the protocol, we think of $\go{p}$ as initially
residing in $p$'s cache;
so, $\Phi_{CC}$ is also a proper potential function.
We define $\alpha_{CC}(\ell)$ as the amortized cost in the CC model of line $\ell$ of Algorithm \ref{algorithm:DSM},
and state a lemma analgous to Lemma \ref{lem:DSMamortizedCosts} for the CC model (proof in Appendix C).

\begin{restatable}{lemma}{ccAmortizedComplexity}
    $\alpha_{CC}(\ell)$ is bounded by a constant for all $\ell \in [1,11] - [4,6]$, and
    $\alpha_{CC}(\ell) \le 0$ for $\ell \in [4,6]$.
\label{lem:CCamortizedCosts}
\end{restatable}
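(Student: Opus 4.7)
The plan is to prove Lemma~\ref{lem:CCamortizedCosts} by a line-by-line analysis parallel to the proof of Lemma~\ref{lem:DSMamortizedCosts}, computing for each $\ell \in [1,11]$ the real RMR cost of a step at line~$\ell$ plus the change $\Delta\Phi_{CC}$ induced by the step. The two new ingredients in $\Phi_{CC}$ beyond $\Phi_{DSM}$---the factor of $3$ on $\indic{\go{p} = \true}$ and the indicator $\indic{\notcached{p}}$---are precisely designed to absorb the CC-specific costs: an RMR for a read of $\go{p}$ at line~4 whenever $\go{p}$ is not in $p$'s cache, and the write-RMR at line~5.

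I would start with the critical lines~4,~5, and~6, which can fire arbitrarily often inside the while-loop and must therefore be shown amortized-free. A single read at line~4 incurs one RMR exactly when $\indic{\notcached{p}} = 1$ beforehand, and that read itself then zeros the indicator, so $\Delta\Phi_{CC} = -1$ pays for the RMR; if $\notcached{p}$ is already false, both the real cost and $\Delta\Phi_{CC}$ are $0$. At line~5 the write costs $1$ RMR, but $\indic{\go{p}=\true}$ flips from $1$ to $0$ (contributing $-3$) while $\indic{\pc{p}=6}$ flips from $0$ to $1$ (contributing $+1$), so the amortized cost is at most $-1$. At line~6 the FAS costs $1$ RMR, and I would verify that $\Delta\Phi_{CC} \le -1$ in each of the three branches determined by the return value $\temp{p}$: when $\temp{p}=\token$ or $\temp{p}\in\{\nil,\&\go{p}\}$ the step moves $\pc{p}$ out of $6$; and in the remaining branch (where control cycles back to line~6), $\indic{\pc{p}=6}$ stays at $1$ but the predecessor's indicator $\indic{*\mynode{q}=\pred{q}}$ drops from $1$ to $0$, because by $I_7$ the aborted predecessor $q$ had $*\mynode{q}=\pred{q}$ and $p$'s FAS overwrites that slot with $\&\go{p}$.

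For the remaining lines, each of which runs at most a constant number of times per attempt, I would check that a single FAS by $p$ changes at most one or two of the indicators of the form $\indic{*\mynode{r}=\pred{r}}$ or $\indic{\pc{r}=6}$ by at most $1$, so lines~1, 2, 3, 7, 9, and~10 each have amortized cost bounded by a small constant. The only potentially large contributions come from lines~8 and~11, where $p$ writes $\true$ into $\&\go{r}$ for its successor $r$; this can flip $\indic{\go{r}=\true}$ and $\indic{\notcached{r}}$ each from $0$ to $1$, giving $\Delta\Phi_{CC} \le 4$ and hence $\alpha_{CC}(8), \alpha_{CC}(11) \le 5$. The main obstacle I anticipate is the case analysis for the FAS lines~3,~6,~9, and~10, where the control flow branches on the return value and one must invoke the right invariants from Figure~\ref{fig:invariant}---especially $I_7$, $I_8$, $I_9$, and $I_{11}$---to certify which indicator transitions actually occur; in particular, justifying the $-1$ drops in $\indic{*\mynode{q}=\pred{q}}$ that offset the real cost requires knowing that the previous value of $*\mynode{q}$ was $\pred{q}$, which is exactly what these invariants encode. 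Once this bookkeeping (including which process owns the touched variable in each step) is set up uniformly, the per-line inequalities collapse to routine arithmetic on the coefficients in $\Phi_{CC}$.
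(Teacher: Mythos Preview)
Your approach is essentially identical to the paper's: a line-by-line case analysis computing real cost plus $\Delta\Phi_{CC}$, with the same use of $I_7$ and $I_{12}$ to justify the key indicator drops at lines~5 and~6. One small arithmetic slip: at line~5 you omit that the write also uncaches $\go{p}$, so $\indic{\notcached{p}}$ can rise by~1 and the amortized cost is~$0$, not $-1$ (still $\le 0$, so the lemma is unaffected); conversely, at lines~8 and~11 you correctly account for $\indic{\notcached{r}}$ rising, which the paper's own proof actually glosses over, so your bound of~$5$ there is the more careful one.
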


Using Lemmas \ref{lem:DSMamortizedCosts} and \ref{lem:CCamortizedCosts},
we present a unified proof of optimal amortized RMR complexity, worst-case constant Exit, and Fast Abort
in both the CC and DSM models.

\begin{theorem}
    Algorithm \ref{algorithm:DSM} has an $O(1)$ amortized RMR complexity for abortable mutual exclusion with starvation freedom and AFCFS in
    both the CC and DSM models.
    Furthermore, Algorithm \ref{algorithm:DSM} has a worst-case $O(1)$ RMR complexity for the Exit Section, uses only $O(n)$ space,
    and satisfies Fast Abort in both the CC and DSM models.
\label{thm:RMRcomplexity}
\end{theorem}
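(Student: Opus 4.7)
The plan is to derive every clause of the theorem by bundling together results already in hand: Corollary 4.2 (mutual exclusion), Theorem 4.2 (AFCFS), Theorem 4.3 (starvation freedom), the space bound (Theorem 5.1), and the per-line amortized cost bounds of Lemmas \ref{lem:DSMamortizedCosts} and \ref{lem:CCamortizedCosts}. So most of the work has been done; what remains is to combine the per-line amortized bounds into a per-attempt bound and to read off the Exit and Abort costs directly from the code.

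First I would analyze an arbitrary attempt by a process $p$. Each attempt executes Line~1 at most once, Line~2 at most once (only if the Line~1 $\Swap{}$ returns a value different from $\pred{p}$), Line~3 exactly once on the path into the wait-loop, and then enters the while-loop over Lines 4--6. The loop can iterate arbitrarily many times, but Lemmas \ref{lem:DSMamortizedCosts} and \ref{lem:CCamortizedCosts} give $\alpha_{DSM}(\ell),\alpha_{CC}(\ell) \le 0$ for $\ell \in [4,6]$, so these iterations contribute nothing to the amortized total. The attempt then leaves the wait-loop and finishes through one of two paths: either Exit (Lines 7, 8), or Abort (Line 9 followed by either Lines 10, 11 or a jump to Lines 7, 8). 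In every case the non-loop lines executed per attempt form a fixed finite set, each contributing $O(1)$ amortized RMRs by the two amortized-cost lemmas. Summing gives an $O(1)$ amortized RMR bound per attempt simultaneously in the CC and DSM models; together with Theorems 4.2 and 4.3 and Corollary 4.2, this establishes the first sentence of the theorem.

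Next I would handle the last three claims. The space bound follows from Theorem 5.1, restated verbatim. The worst-case Exit bound is read straight off the code: the Exit section consists of Line~7 (one $\Swap{}$) and at most Line~8 (one write), hence at most two RMRs in either model, independent of $n$. Fast Abort is argued the same way: once the abort signal arrives, the process is required by the specification to reach an abort-entry point after at most one more shared-memory step (Line 3, 4, 5, or 6), then executes Line~9, and from there either jumps to Lines 7, 8 or executes Lines 10, 11. The total number of shared-memory operations from the arrival of the abort signal to re-entering Remainder is therefore bounded by a fixed constant independent of $n$ or the concurrency, which is exactly the Fast Abort property.

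I do not expect a genuine obstacle: the argument is essentially a bookkeeping step once Lemmas \ref{lem:DSMamortizedCosts} and \ref{lem:CCamortizedCosts} are available. The only subtlety worth being explicit about is that the amortized analysis covers the entire run, so to convert this into a per-attempt bound one has to note that both $\Phi_{DSM}$ and $\Phi_{CC}$ are proper potential functions (zero initially, always nonnegative, with the CC initialization accounted for by charging one unit per joining process). Thus the total actual RMRs across any finite prefix is at most the sum of per-line amortized costs, which is $O(1)$ times the number of (attempt, non-loop line) pairs executed, which is $O(1)$ times the number of attempts; dividing by the number of attempts yields the desired $O(1)$ amortized per-attempt bound.
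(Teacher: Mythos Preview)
Your proposal is correct and follows essentially the same approach as the paper: use Lemmas~\ref{lem:DSMamortizedCosts} and~\ref{lem:CCamortizedCosts} to observe that the loop lines 4--6 have nonpositive amortized cost while all other lines are executed at most once per attempt with $O(1)$ amortized cost each, then read the Exit and Fast Abort bounds directly from the code. One small inaccuracy: you say the process reaches an abort-entry point ``after at most one more shared-memory step,'' but if the abort signal arrives while $p$ is at Line~1 or~2, it must first execute through Line~3 before it may jump to the Abort section, so the paper's bound of three shared-memory instructions to reach the Abort (or Exit) section is the right one---this does not affect your conclusion, since the total is still a constant.
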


\begin{proof}
    In both the CC and DSM models, 
    all the lines (4, 5, and 6) that appear in the while-loop of Algorithm \ref{algorithm:DSM} have zero amortized cost,
    and the remaining lines have constant amortized cost by Lemma \ref{lem:CCamortizedCosts} and \ref{lem:DSMamortizedCosts}.
    Since lines that are not in the loop are executed at most once per attempt, the amortized cost of an attempt is $O(1)$
    in both the CC and DSM models.
    
    The Exit Section has only two shared memory instructions, and thus has worst-case $O(1)$ RMR complexity in both models.
    Regardless of when the environment sends an abort signal to process $p$, process $p$ can reach either the Exit or Abort
    section within three shared memory instructions, and either of these sections takes at most three more shared memory
    instructions.
    So, aborting happens within six shared instructions, and is thereby the algorithm satisfies Fast Abort.
\end{proof}

\appendix

\section{Proof of The Invariant}

\invariant*

\begin{proof}
    We will prove the invariant by induction on steps of the multiprocessor system.
    In particular, we consider what happens when a process $\pi$ executes its next step.
    
    \begin{itemize}
        \item[]
        {\em Base Case:}
        At the beginning $k = 0$ and $a_0 = \&\SENTINEL$.
        $I_1$, and $I_4$ hold true since $\X = \&\SENTINEL = a_0$.
        $I_2, I_3, I_7, I_{10}$, and $I_{11}$ hold trivially since $k = 0$.
        $I_6$ holds trivially since $\pc{p} = 1$ for every $p \in P$.
        $I_5$ holds since $\pred{p} = \mynode{p} \in N$ for every $p \in P$.
        $I_8$ holds since $\pc{p} = 1$ and $*\mynode{p} = \nil \ne \mynode{p} = \pred{p}$ for every $p \in P$.
        $I_9$ holds since $k = 0$ and $*a_0 = \SENTINEL = \token$.

        \item[]
        {\em Induction Step:}
        We assume that invariant holds in a particular configuration, and consider what happens if the next step
        is taken by some process $\pi$ executing one of the eleven possible lines.
        We use primed variables to reflect the truth after the step, and unprimed variables before when there is ambiguity.
        \begin{itemize}
            \item[]
            {\em line 1:}
            We consider two cases: $*\mynode{\pi} = \pred{\pi}$ and $*\mynode{\pi} \ne \pred{\pi}$.
            
            If $*\mynode{\pi} = \pred{\pi}$, then $\pi \in Q$ by the contrapositive of $I_8$.
            So let $\pi = q_i$.
            So, the comparison on line 1 will fail and $\pc{\pi}$ will become $3$.
            $I_7$ will hold since $*\mynode{q_i} = \nil$.
            The rest of the invariants will continue to hold since they are unaffected.
            
            If $*\mynode{\pi} \ne \pred{\pi}$, then $\pi \not\in Q$ by $I_7$.
            So, the comparison on line 1 will succeed and $\pc{\pi}$ will become $2$.
            $I_6$ and $I_8$ continue to hold since $\pi \not\in Q$ and $*\mynode{\pi} = \nil$.
            The rest of the invariants will continue to hold since they are unaffected.

            \item[]
            {\em line 2:}
            By $I_6$ we know $\pi \not\in Q$ before the line execution.
            After the execution, $q_1,\ldots,q_k$ and $a_0,\ldots,a_k$ will remain unchanged.
            The value of $k'$ will be $k+1$, with $\pi = q_{k'}$ and $\mynode{\pi} = a_{k'}$.
            Finally, $\pc{\pi}' = 3$.
            $I_1, I_2$, and $I_3$ will continue to hold by the $\Swap{}$ on line 2.
            $I_5$ continues to hold since $\pred{\pi} = a_K \in N$ by $I_4$.
            $I_7$ continues to holds for $q_{k'}$ since $*\mynode{q_{k'}} = \nil$;
            $I_7$ continues to hold for the other $q_i$'s since their program counters are unaffected,
            and we notice that $\nil \in \{\nil, \&\go{q_{i+1}} \}$.
            $I_9$ continues to hold since if $k' = 1$, the $k = 0$ before line 2,
            and otherwise it is unaffected by the line.
            The rest of the invariants will continue to hold since they are unaffected.
            
            \item[]
            {\em line 3:}
            By the contrapositive of $I_8$, we establish that $\pi \in Q$.
            So, let $\pi = q_i$ for $i \in [1,k]$.
            $I_7$ and $I_9$ imply that there are three cases:
            
            \begin{enumerate}
            \item
            Assume $i = 1$ and $*a_0 = \token$.
            In this case, $*a_0$ becomes $\&\go{q_1}$, and $\pc{\pi}' = \pc{q_1}' = 7$ after $\pi$
            notices that the while-loop condition after line 3 (which is a local action) fails.
            This immediately implies $I_9$ continues to hold.
            $I_{10}$ continues to hold since $q_1$ was unchanged by the execution of line 3.
            The rest of the invariants continue to hold since they are unaffected.
            
            \item
            Assume $i > 1$ and $*a_{i-1} \in \{\nil, \&\go{q_i}\}$.
            This means that line 3, will simply replace the initial contents of $*a_{i-1}$ with $\&\go{q_i}$,
            and $\pc{\pi}' = 4$.
            $I_{11}$ continues to hold since $(*a_{i-1} = \&\go{q_i})$.
            All invariants continue to hold since they are unaffected.
            
            \item
            Assume $i > 1$ and $*a_{i-1} = \pred{q_{i-1}}$.
            This means that line 3, will result in replacing the contents of $*a_{i-1}$ with $\&\go{q_i}$.
            This will also trigger the if-condition inside the while loop and $\pred{\pi}$ will become
            $\pred{q_{i-1}} = a_{i-2}$.
            $\pc{\pi}'$ becomes 6.
            This action removes the old $q_{i-1}$ from $Q$ by the uniqueness established through $I_1,I_2,I_3,I_4$, making $k' = k - 1$, and we rename $q_i,\ldots,q_k$ to $q'_{i-1},\ldots,q'_{k-1}$.
            $I_1, I_2$ and $I_3$ continue to hold by the renaming.
            By $I_6$ and $I_7$ we establish that the old $q_{i-1}$ had $\pc{q_{i-1}} \in \{1,11\}$;
            so we notice that $I_8$ continues to hold for $q_{i-1}$.
            The rest of the invariants continue to hold since they are unaffected.
            \end{enumerate}
            
            \item[]
            {\em line 4:}
            If $\go{\pi} = \false$, then $\pc{\pi}'$ remains at 4. 
            If $\go{\pi} = \true$, then $\pc{\pi}'$ becomes 5.
            In both cases, all invariants continue to hold as they are unaffected.
            
            \item[]
            {\em line 5:}
            $\go{\pi}'$ becomes $\false$ regardless of its initial value and $\pc{\pi}'$ becomes 6.
            All invariants continue to hold as they are unaffected.
            
            \item[]
            {\em line 6:}
            The analysis of this line is precisely identical to that of line 3.
            
            \item[]
            {\em line 7:}
            By $I_{10}$ we establish that $\pi = q_1$.
            Since, the local instruction $\mynode{\pi} \gets \pred{\pi}$ completes atomically along with the shared instruction, by $\pi \not\in Q'$ by the uniqueness of $Q'$ established by $I_1,I_2,I_3,I_4$.
            The old $\mynode{\pi} = \mynode{q_1}$ becomes the new $a'_0$, $k' = k-1$ and the old $q_2,\ldots,q_k$ and renamed to $q'_1,\ldots,q'_{k-1}$, and $I_1, I_2$ and $I_3$ continue to hold by the renaming.
            $I_4$ continues to hold since $a_0 = \mynode{\pi}'$ and $\mynode{\pi} = a_0'$; the names were simply permuted.
            $I_9$ continues to hold since $*a_0 = \token$.
            By $I_9$, $*\mynode{\pi}' = *a_0 \in \{\nil, \go{\pi}\}$, and so $I_8$ continues to hold true.

            $I_7$ shows that $*\mynode{\pi}$ could have either been $\nil$ or $\&\go{q_2}$.
            We consider three cases.
            
            \begin{enumerate}
                \item
                Assume $k = 1$.
                In this case, $I_7$ establishes that $\temp{\pi}'$ would surely have become $\nil$
                and thus $\pi$ would go to the Remainder Section ($\pc{\pi}' = 1$).
                The rest of the invariants continue to hold since they are unaffected or become trivial
                since $k' = 0$.
                
                \item
                Assume $k > 1$ and $*\mynode{\pi} = \nil$.
                $\pi$ goes to the Remainder Section ($\pc{\pi}' = 1$), since $\temp{\pi}' = \nil$.
                By $I_{11}$ on $i = 2$, we establish that $\pc{q'_1} \ne 4$ currently;
                so, $I_{11}$ continues to hold.
                The rest of the invariants continue to hold since they are unaffected.
                
                \item
                Assume $k > 1$ and $*\mynode{\pi} = \&\go{q_2} = \&\go{q'_1}$.
                This is the only case in which $\temp{\pi}' = \&\go{q'_1} \ne \nil$.
                This in turn implies that $I_{11}$ will continue to hold since $\pc{\pi}' = 8$.
                The rest of the invariants continue to hold since they are unaffected.
            \end{enumerate}
            
            \item[]
            {\em line 8:}
            $I_{11}$ is self inducting in this case, since if it were true that $(\pc{q_1} = 4 \wedge \go{q_1} = \false)$ and $\pi$ were indeed the process $p \in P$ that had $(\pc{p} = 8 \wedge \temp{p} = \&\go{q_1})$;
            then, by executing line 8, $\pi$ would ensure that $\go{q_1} = \true \ne \false$.
            The rest of the invariants continue to hold since they are unaffected.
            
            \item[]
            {\em line 9:}
            We emphasize that line 9 is almost identical to line 3, and thus the proof is also almost identical as seen below.
            By the contrapositive of $I_8$, we establish that $\pi \in Q$.
            So, let $\pi = q_i$ for $i \in [1,k]$.
            $I_7$ and $I_9$ imply that there are three cases:
            
            \begin{enumerate}
            \item
            Assume $i = 1$ and $*a_0 = \token$.
            In this case, $*a_0$ becomes $\nil$, and $\pc{\pi}' = \pc{q_1}' = 7$ after $\pi$
            notices that the if condition after line 9 (which is a local action) fails.
            This immediately implies $I_9$ continues to hold.
            $I_{10}$ continues to hold since $q_1$ was unchanged by the execution of line 9.
            The rest of the invariants continue to hold since they are unaffected.
            
            \item
            Assume $i > 1$ and $*a_{i-1} \in \{\nil, \&\go{q_i}\}$.
            This means that line 9, will simply replace the initial contents of $*a_{i-1}$ with $\nil$,
            and $\pc{\pi}' = 10$.
            All invariants continue to hold since they are unaffected.
            
            \item
            Assume $i > 1$ and $*a_{i-1} = \pred{q_{i-1}}$.
            This means that line 9, will result in replacing the contents of $*a_{i-1}$ with $\nil$.
            This will also trigger the else-if-condition and $\pred{\pi}$ will become
            $\pred{q_{i-1}} = a_{i-2}$.
            $\pc{\pi}'$ becomes 10.
            This action removes the old $q_{i-1}$ from $Q$ by the uniqueness established through $I_1,I_2,I_3,I_4$, making $k' = k - 1$, and we rename $q_i,\ldots,q_k$ to $q'_{i-1},\ldots,q'_{k-1}$.
            $I_1, I_2$ and $I_3$ continue to hold by the renaming.
            By $I_6$ and $I_7$ we establish that the old $q_{i-1}$ had $\pc{q_{i-1}} \in \{1,11\}$;
            so we notice that $I_8$ continues to hold for $q_{i-1}$.
            The rest of the invariants continue to hold since they are unaffected.
            \end{enumerate}
            
            \item[]
            {\em line 10:}
            The contrapositive of $I_8$ implies $\pi \in Q$. 
            Let $\pi = q_i$ for $i \in [1,k]$.
            $I_7$ will continue to hold since $\pc{\pi}' \in \{1, 11\}$, $\pi \in Q$, and $*\mynode{\pi} = \pred{\pi}$.
            $I_7$ shows that $*\mynode{\pi}$ could have either been $\nil$ or $\&\go{q_{i+1}}$.
            We consider three cases.
            
            \begin{enumerate}
                \item
                Assume $k = 1$.
                In this case, $I_7$ establishes that $\temp{\pi}'$ would surely have become $\nil$
                and thus $\pi$ would go to the Remainder Section ($\pc{\pi}' = 1$).
                The rest of the invariants continue to hold since they are unaffected.
                
                \item
                Assume $k > 1$ and $*\mynode{\pi} = \nil$.
                In this case, $\temp{\pi}'$ would surely have become $\nil$
                and thus $\pi$ would go to the Remainder Section ($\pc{\pi}' = 1$).
                The rest of the invariants continue to hold since they are unaffected.

                \item
                Assume $k > 1$ and $*\mynode{\pi} = \&\go{q_{i+1}}$.
                This is the only case in which $\temp{\pi}' = \&\go{q_{i+1}} \ne \nil$.
                This in turn implies that $I_{11}$ will continue to hold since 
                $(*a_i = \pred{q_i}) \wedge (\pc{q_i}' = 11) \wedge (\temp{q_i} = \&\go{q_{i-1}})$.
                The rest of the invariants continue to hold since they are unaffected.
            \end{enumerate}

            \item[]
            {\em line 11:}
            $I_{11}$ is self inducting in this case, since if it were true that $(\pc{q_i} = 4 \wedge \go{q_i} = \false)$ and $\pi$ were indeed the process $q_{i+1}$ that had $(*a_i = \pred{q_i}) \wedge (\pc{q_i}' = 11) \wedge (\temp{q_i} = \&\go{q_{i-1}})$;
            then, by executing line 11, $\pi$ would ensure that $\go{q_i} = \true \ne \false$.
            The rest of the invariants continue to hold since they are unaffected.
            
        \end{itemize}
    \end{itemize}
\end{proof}

\section{Proof of Starvation Freedom Lemma}

\starvationLemma*

\begin{proof}
Since $\pc{p}(C) \in \{3, 4, 5, 6\}$, it follows from $I_8$ that $p \in Q(C)$.
Let $p = q_i(C)$, and $\delta(p, C)$ be the $i$-digit decimal number $d_1d_2 \ldots d_i$,
where the digits $d_j$ are as defined above.
Let $m = \min \{j \mid 1 \le j \le i, \, \pc{q_j}(C) \neq 1 \}$.

We prove Part (1) of the lemma in two cases:
\begin{itemize}
\item
Suppose that $\pc{q_m}(C) = 4 \, \wedge \, \go{q_m}(C) = \false$.
Then, since $\pi \in \Psi(C)$, it follows that $\pc{\pi}(C) \in \{8, 11\} \, \wedge \, \temp{\pi}(C) = \&\go{q_m}$.
Therefore, $\pi$'s step writes $\true$ in $\go{q_m}$, making $f(q_m,C') = 8$ (note that $f(q_m,C)$ was 9).
Thus, the $m$th digit is less in $\delta(p, C')$ than in $\delta(p, C)$, while the other digits of $\delta(p)$ remain unchanged from $C$ to $C'$.
Therefore, $\delta(p, C') < \delta(p, C)$.

\item
Suppose that it is not the case that $\pc{q_m}(C) = 4 \, \wedge \, \go{q_m}(C) = \false$.
Since $\pi \in \Psi(C)$, it follows that $\pi = q_m$.

Suppose that $m = 1$ and $\pc{q_m}(C) \in \{3, 6\}$.
Then, it follows from $I_9$ that $q_1$'s step causes it jump to Line 7.
So, the first digit of $\delta(p)$ changes from $f(q_1)$ in $C$ (which is 2) to $f(q_1)$ in $C'$ (which is 1);
therefore, $\delta(p, C') < \delta(p, C)$.

Suppose that $m > 1$ and $\pc{q_m}(C) \in \{3, 6\}$.
Since $\pc{q_{m-1}} = 1$, it follows from the second part of $I_7$ that *$\mynode{q_{m-1}} = \pred{q_{m-1}}$.
Therefore, $q_{m}$'s step shortens the queue by one, causing
$p$'s position in $Q$ to change from $i$ in $C$ to $i-1$ in $C'$.
Thus, $\delta(p, C')$ has one fewer digit than $\delta(p, C)$.
Therefore, $\delta(p, C') < \delta(p, C)$.

Suppose that $\pc{q_m}(C) = 7$.
Then, $m$ must be 1 (by $I_{10}$) and $q_1$'s step causes $q_1(C)$ to be no longer in $Q$ in $C'$,
thereby causing $p$'s position in $Q$ to change from $i$ in $C$ to $i-1$ in $C'$.
Thus, $\delta(p, C')$ has one fewer digit than $\delta(p, C)$.
Therefore, $\delta(p, C') < \delta(p, C)$.

If $\pc{q_m}(C)$ is anything else (i.e., $\pc{q_m}(C) \in \{4, 5, 7, 9, 10, 11\}$),
the function $f$ is so defined that because of the changed value of $q_m$'s program counter,
$f(q_m)$ is less in $C'$ than in $C$.
Thus, the $m$th digit of $\delta(p, C')$ is less than the $m$th digit of $\delta(p, C)$
(while all other digits of $\delta(p, C')$ are respectively the same as those of $\delta(p, C)$).
Therefore, $\delta(p, C') < \delta(p, C)$.
\end{itemize}

For the proof of Part (2) of the lemma, we consider the same two cases.

\begin{itemize}
\item
Suppose that $\pc{q_m}(C) = 4 \, \wedge \, \go{q_m}(C) = \false$, and $\pi \not \in \Psi(C)$.

If $\pi = q_m$, $q_m$'s step will not change the configuration (i.e., $C' = C$);
therefore, $\delta(p, C') = \delta(p, C)$  and $\Psi(p, C') = \Psi(p, C)$.

If $\pi = q_j$ for some $j < m$, then $q_j$'s PC changes from 1 to 3.
So, by the definition of $f$, the $j$th digit of $\delta(p)$ changes from 3 in $C$ to 2 in $C'$,
while all more significant digits of $\delta(p, C')$ are respectively the same as those of $\delta(p, C)$.
Therefore, $\delta(p, C') < \delta(p, C)$.

If $\pi$ is different from all of $q_1, q_2, \ldots, q_m$, then $\delta(p, C') = \delta(p, C)$ and $\Psi(p,C') = \Psi(p,C)$.

\item
Suppose that it is not the case that $\pc{q_m}(C) = 4 \, \wedge \, \go{q_m}(C) = \false$, and $\pi \not \in \Psi(C)$.
Then $\pi \neq q_m$.
If $\pi = q_j$ for some $j < m$, then $q_j$'s PC changes from 1 to 3 and, as just argued, $\delta(p, C') < \delta(p, C)$.
If $\pi$ is different from all of $q_1, q_2, \ldots, q_m$, then $\delta(p, C') = \delta(p, C)$ and $\Psi(p,C') = \Psi(p,C)$.
\end{itemize}

\end{proof}

\section{Proof of Amortized Line Costs}

\dsmAmortizedComplexity*

\begin{proof}
    We will prove the invariant by induction on steps of the multiprocessor system.
    In particular, we consider what happens when a process $\pi$ executes its next step.
    
    \begin{itemize}
        \item[]
        {\em line 1:}
        The real cost of the line is zero or one depending on whether $*\mynode{p}$ is in $\pi$'s partition. 
        This line can only decrease the potential (by decreasing the indicator $\indic{*\mynode{p} = \pred{p}}$).
        So, amortized cost is $\alpha_{DSM}(1) \le 1$.

        \item[]
        {\em line 2:}
        The real cost of the line is one due to the \Swap{},
        and the potential function is unchanged.
        So, $\alpha_{DSM}(2) = 1$.
        
        \item[]
        {\em lines 3,7, and 9:}
        The real cost of the line is zero or one depending on whether the $\Swap{}$ is on a node in $\pi$'s partition.
        This line can only decrease the potential (by decreasing the indicator $\indic{*\mynode{p} = \pred{p}}$).
        So, $\alpha_{DSM}(3), \alpha_{DSM}(7), \alpha_{DSM}(9) \le 1$.
        
        \item[]
        {\em line 4:}
        The real cost of this line is zero since $\go{p}$ is in $p$'s partition.
        The potential change is also zero.
        So, $\alpha_{DSM}(4) = 0$.
        
        \item[]
        {\em line 5:}
        The real cost of this line is zero since $\go{p}$ is in $p$'s partition.
        When this line is executed,
        $\indic{\go{\pi} = \true}$ indicator must decrease by one due to $I_{12}$ and $\indic{\pc{\pi} = 6}$
        must increase by one.
        So, the potential change is zero.
        So, $\alpha_{DSM}(5) = 0$.
       
        \item[]
        {\em line 6:}
        The real cost of the line is zero or one depending on whether $*\pred{p}$ is in $\pi$'s partition.
        Here we have two cases.
        If $*\pred{\pi} = \mynode{p}$ for some process $p$ and $*\mynode{p} = \pred{p}$,
        then we use the indicator $\indic{*\mynode{p} = \pred{p}}$ to pay for the real cost (since $\pc{\pi}$ will
        become 6 again).
        Otherwise, $\pc{\pi}$ will end up at some other line, and we pay for the real cost using the potential 
        drop caused by the indicator $\indic{\pc{\pi} = 6}$.
        So, in either case the amortized cost of this line is $\alpha_{DSM}(6) \le 0$.
        
        \item[]
        {\em line 8, and 11:}
        The real cost of the line is zero or one depending on whether $*\temp{\pi}$ is in $\pi$'s partition.
        The potential change is at most one, since at most one indicator $\indic{\go{\pi} = \true}$ can go high.
        So, $\alpha_{DSM}(8), \alpha_{DSM}(11) \le 2$.
        
        \item[]
        {\em line 10:}
        The real cost of the line is zero or one depending on whether the $\Swap{}$ is on a node in $\pi$'s partition. 
        The potential can go up by at most one since $*\mynode{\pi} = \pred{\pi}$ after the line.
        So, $\alpha_{DSM}(10) \le 2$.
    \end{itemize}
\end{proof}

\ccAmortizedComplexity*

\begin{proof}
    We will prove the invariant by induction on steps of the multiprocessor system.
    In particular, we consider what happens when a process $\pi$ executes its next step.
    
    \begin{itemize}
        \item[]
        {\em line 1:}
        The real cost of the line is one due to the \Swap{}. 
        The change in potential is non-positive.
        So, amortized cost is $\alpha(1) \le 1$.

        \item[]
        {\em line 2:}
        The real cost of the line is one due to the \Swap{},
        The change in the potential is zero.
        So, the amortized cost is $\alpha(2) = 1$.
        
        \item[]
        {\em lines 3, 7, and 9:}
        The real cost of this line is one due to the \Swap{}.
        The change in potential is once again non-positive.
        So, the amortized cost is $\alpha_{CC}(3), \alpha_{CC}(7), \alpha_{CC}(11) \le 1$.
        
        \item[]
        {\em line 4:}
        There are two cases for this line: either $\go{\pi}$ is in $\pi$'s cache, or not.
        If $\go{\pi}$ is cached, then both the real and amortized costs are zero.
        If $\go{\pi}$ is not cached, then the real cost is one, but is cancelled out by the drop in potential caused by the fact that
        $\go{\pi}$ becomes cached.
        So, the amortized cost of this line is $\alpha_{CC}(4) \le 0$.
        
        \item[]
        {\em line 5:}
        The real cost of this line is one due to the writing of $\false$.
        Additionally, $\go{\pi}$ becomes uncached (if it was previously cached), thereby causing a one unit potential increase;
        and $\pc{\pi}$ becomes 6, causing the corresponding indicator to become one.
        However, by $I_{12}$, there is a three unit potential drop due to $\go{\pi}$ becoming $\false$.
        So, the amortized cost of this line is $\alpha_{CC}(5) = 0$.

        \item[]
        {\em line 6:}
        The real cost of the line is one due to the \Swap{}.
        Here we have two cases.
        If $*\pred{\pi} = \mynode{p}$ for some process $p$, and $*\mynode{p} = \pred{p}$,
        then we use the indicator $\indic{*\mynode{p} = \pred{p}}$ to pay for the real cost (since $\pc{\pi}$ will
        become 6 again).
        Otherwise, $\pc{\pi}$ will end up at some other line, and we pay for the real cost using the potential 
        drop caused by the indicator $\indic{\pc{\pi} = 6}$.
        So, in either case the amortized cost of this line is $\alpha_{CC}(6) \le 0$.
        
        \item[]
        {\em lines 8 and 11:}
        The real cost of this line is one for the write operation.
        Since $*\temp{\pi}$ is a $\go{}$-variable (by $I_{11}$) being set to $\true$, this line can cause a potential increase of three units.
        So, the amortized cost of this line is $\alpha_{CC}(8), \alpha_{CC}(11) \le 4$.
        
        \item[]
        {\em line 10:}
        The real cost of the line is one due to the \Swap{}.
        Since $*\mynode{\pi}$ becomes $\pred{\pi}$ due to this line,
        there is a possible potential increase of one unit.
        So, the amortized cost of this line is $\alpha_{CC}(10) \le 2$.
    \end{itemize}
\end{proof}

\end{document}